\newtheorem{lemma}{Lemma}
\newtheorem{theorem}{Theorem}
\begin{document}

\title{Performance Analysis and Optimization of Cooperative Full-Duplex D2D Communication Underlaying Cellular Networks}

\author{Guoling~Liu,~Wenjiang~Feng,~Zhu~Han,~\IEEEmembership{Fellow,~IEEE},\\and~Weiheng~Jiang,~\IEEEmembership{Member,~IEEE}
\thanks{G. Liu, W. Feng and W. Jiang are with the College of Communication Engineering, Chongqing University,
Chongqing, 400044 China~(e-mail: \{liuguoling, fengwj, whjiang\}@cqu.edu.cn).}
\thanks{Z. Han is with the Department of Electrical and Computer Engineering, University of Houston, Houston, TX 77004 USA, and also with the Department of Computer Science, University of Houston, Houston, TX 77004 USA (e-mail: zhan2@uh.edu).}}

\maketitle

\begin{abstract}
This paper investigates the cooperative full-duplex device-to-device (D2D) communication underlaying a cellular network, where the cellular user (CU) acts as a full-duplex relay to assist the D2D communication. To simultaneously support D2D relaying and uplink transmission, superposition coding and successive interference cancellation are adopted at the CU and the D2D receiver, respectively. The achievable rate region and joint outage probability are derived to characterize the performance of the considered system. An optimal power allocation scheme is proposed to maximize the minimum achievable rate. Besides, by analyzing the upper bound of the joint outage probability, we study a suboptimal power allocation to improve the outage performance. The simulation results confirm the theoretical analysis and the advantages of the proposed power allocation schemes.
\end{abstract}

\begin{IEEEkeywords}
Full-duplex communication, D2D, superposition coding, successive interference cancellation, power allocation.
\end{IEEEkeywords}

\IEEEpeerreviewmaketitle

\section{Introduction}
To meet the booming data demand of emerging wireless communication services, researchers in academia and industry are seeking for new technologies to reform the traditional cellular networks. As an attractive candidate, device-to-device (D2D) communication \cite{asadi_survey} underlaying cellular networks draws wide attention. By allowing direct communication between proximal users without traversing the base station (BS) or core network, D2D communication provides improvement in spectrum efficiency, energy efficiency and communication delay, and thus enables high-rate proximity-aware services, such as media sharing, social network and gaming \cite{song2015wireless}. Consequently, D2D communication is considered as a promising technology in the next generation wireless network \cite{tehrani2014device}.
\subsection{Background of Cooperative D2D and Full-Duplex Communication}
Although D2D communication can support high spectrum efficiency, the actual transmission rate of D2D users is restricted by practical constraints, such as modulation and coding schemes. Therefore, the channel capacity of D2D links is insufficiently utilized \cite{tang2017cooperative}. To address the redundant capacity problem, cooperation is introduced to D2D communication for coverage extension and performance enhancement of the cellular networks. Reference \cite{shalmashi2014cooperative} allows a D2D transmitter (DT) to act as a relay to assist the downlink cellular transmission and at the same time transmit its own data to a D2D receiver (DR) by employing superposition coding (SC) \cite{popovski2008improving}. The cellular user performs maximum-ratio-combining to decode the data from the BS and DT. And the DR uses successive interference cancellation (SIC) \cite{blomer2009transmission} to decode the D2D data. Therefore, both the cellular and D2D users can benefit from cooperation. The same cooperation scheme is also investigated in the cellular uplink and overlay D2D scenario \cite{cao2015cooperative}, where the employment of SC is replaced by orthogonal radio source allocation for simultaneous cellular and D2D communication. In \cite{pei2013resource}, the authors assign the DT as a two-way relay to establish a bidirectional cellular link while communicating with a DR. A relay selection method is proposed to achieve a larger rate region. The limited battery lifetime of D2D users is considered in \cite{wang2012icc, wang2012energy}. The authors use auction game to modelled the resource allocation problems, and proposed corresponding auction algorithms to optimized the energy efficiency.

The aforementioned researches focus on the half-duplex (HD) cooperation, which suffers a loss of spectrum efficiency. Recently, the in-band full-duplex (FD) cooperation \cite{song2017full,zhang2015full,song2015resource} is a frequent topic. With the breakthrough of self-interference suppression (SIS), simultaneous transmission and reception in the same frequency band becomes practical in realistic wireless networks. The deployment of FD communication may overcome the loss of spectrum efficiency due to HD cooperation. However, the main shortcoming of FD communication is the residual self-interference (RSI) \cite{duarte2012experiment} caused by imperfect SIS. The feasibility and superiority of FD communication in non-cooperative D2D networks has been demonstrated in \cite{ali2015effect,chai2016throughput,tang2016energy}. In \cite{zhang2017full}, the FD DT cooperates with the BS to perform non-orthogonal multiple access (NOMA) \cite{ding2015cooperative} and improves the outage performance of the user with weak cellular downlink. An adaptive multiple access switching method is proposed to dynamically choose the optimal multiple access scheme. As a dual-hop version of the model in \cite{zhang2017full}, Zhang \textit{et al.} investigate the optimal power allocation to minimize the outage probability \cite{zhang2017performance}. To address the fairness issue between the NOMA-strong and NOMA-weak user, another power allocation scheme is studied to maximize the minimum rate achieved by the cellular and D2D link. Reference \cite{zhang2015power} considers the same cooperative scheme as in \cite{shalmashi2014cooperative} with the DT operating in the FD mode. The cellular and D2D data are superposed in different power levels at the DT. Under the aggregate power constraint, an optimal power allocation algorithm is designed to maximize the achievable rates for the D2D users while fulfilling the minimum rate requirements of the cellular users. However, there is no SIC employed at the receiver of the cellular user and DR to deal with the mutual interference, which can be unsubstantial in some circumstances. The same model with amplify-and-forward relaying is discussed in \cite{dun2017transmission}. A D2D based multicast service is considered in \cite{zhang2015using}, one FD user equipment (UE) helps the BS convey data to a group of UEs. The FD D2D based multicast protocol has higher power efficiency than existing schemes, but the group size is limited to two UEs.
\subsection{Motivation and Related Work}
All the works in \cite{zhang2017full, zhang2017performance, zhang2015power, dun2017transmission, zhang2015using} consider that there is always a direct link between DT and DR. When the DT and DR are separated far away from each other or the D2D link has poor quality, the D2D users either abandon the transmission or resort to the BS for data relaying \cite{tang2017cooperative}, which limits the advantage of the D2D communication. To this end, relay-aided D2D communication becomes an urgent topic. Under the background of HD relaying, many works, including performance analysis \cite{lin2014stochastic, yang2017transmission}, relay selection \cite{ma2017relay}, mode selection \cite{ma2017mode}, resource allocation \cite{hasan2014resource, hasan2014distributed, kishk2017distributed} and energy saving \cite{al2016relay}, etc., are devoted to the investigation of relay-aided D2D communication. By introducing FD relaying, the performance of relay-aided D2D network can be further improved. In \cite{dang2017outage}, Dang \textit{et al.} design a dual-hop FD relay-assisted D2D scheme underlaying a cellular uplink transmission and propose a suboptimal power allocation scheme to minimize the outage probability of the D2D under aggregate power constraint of the DT and the relay. The quality-of-service (QoS) of the cellular user is provisioned by the power control method at the DT. Subsequently, this work is extended to a multi-user OFDMA scenario \cite{dang2017resource}. The relay selection problem in cooperative D2D networks is considered in \cite{ma2016matching}. A matching theory based relay selection method is proposed to minimize the power consumption of D2D users.

In aforementioned works, the coverage extension and performance improvement of the D2D communication are implemented through an extra relay node (acted by an idle D2D user or a dedicated relay). As shown in the literature, this relay node complicates the interference between the cellular and the D2D link, and requires more sophisticated interference manage technology. As a supplement to existing researches, we aim to ameliorate the relay-aided D2D networks from the following aspects. Firstly, the FD relaying is employed to increase the spectrum efficiency. Secondly, we consider the cooperation between the D2D and the cellular users to cope with the interference caused by data relaying. Thirdly, we jointly optimize the performance of both the cellular and the D2D users by the resource allocation method. 
\subsection{Our Contributions}

Inspired by existing works, we propose a new cooperative D2D scheme where the uplink cellular user (CU) acts as an FD decode-and-forward relay between a pair of D2D users, taking RSI at the CU and power control at the DT into account. The CU employs NOMA to support concurrent uplink and D2D communication. Specifically, the CU superimposes the uplink and D2D data with different power levels, and then broadcasts the superposed data to the BS and the DR. The BS and the DR extract their desired data according to a predefined decoding order. The contributions of this paper are summarized as follows.

\begin{enumerate}[label=\arabic*)]
\item \textit{Performance Analysis}: We present the achievable rate region of the cellular uplink rate versus the dual-hop D2D link rate. The Pareto boundary of the region is founded by jointly optimizing the transmit power and power splitting factor at the CU. Besides, we analyse the exact and asymptotic expressions of the joint outage probability of the cellular and D2D link. 
\item \textit{Power Allocation}: Two power allocation schemes are studied in this paper. In consideration of fairness between the cellular and cooperative D2D communication, a maximization problem of minimum achievable rate is formulated at first. Then, due to the intractability of the exact joint outage probability, we derive its upper bound and formulate a relaxed minimization problem of joint outage probability. Both optimization problems are proved to be quasi-concave and have unique solutions.
\item \textit{Simulation and Discussion}: We use Monte Carlo simulation to validate the correctness of performance analysis and the advantage of the proposed power allocation schemes. In the end, we illustrate the impact of RSI on the network performance and compare with the HD network.
\end{enumerate}
\subsection{Organization}
The rest of this paper is organized as follows. Section \ref{section_ii} presents the system model and fundamental assumptions. Detailed analyses of achievable rate region and joint outage probability are provided in Section \ref{section_iii}. In Section \ref{section_iv}, we analyzed the maximization problem of the minimum achievable rate. The relaxed minimization problem of joint outage probability is investigated in Section \ref{section_v}. Simulation results and discussions are shown in Section \ref{section_vi}. In the end, we conclude this paper in Section \ref{section_vii}.

\begin{figure}[!t]
\centering
\includegraphics[width=7cm]{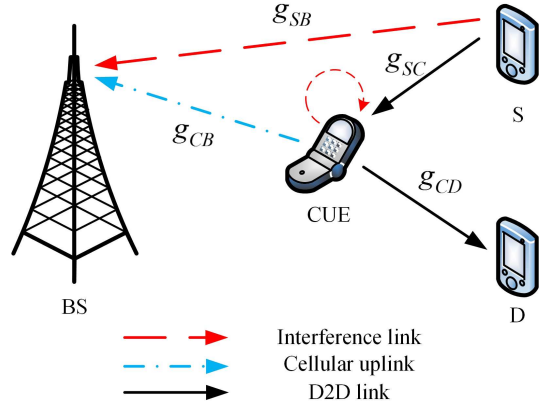}
\caption{System model of cooperative full-duplex D2D communication underlying a cellular network.}
\label{fig_system_model}
\end{figure}

\section{System Model}\label{section_ii}
As shown in Fig. \ref{fig_system_model}, the proposed cooperative FD D2D system consists of one BS, one FD CU, one pair of HD DT and DR, which are denoted as B, C, S and D, respectively. In each transmission period, the CU sends a message to the BS. Meanwhile, the CU acts as an FD decode-and-forward relay to assist the D2D transmission from S to D. To improve the spectral efficiency, the cooperative D2D transmission reuses the cellular uplink channel. Each node is equipped with a single antenna. The channel gain between nodes $i$ and $j$ is denoted as  $g_{ij}$, $i, j \in \{\textrm{B, C, S, D}\}$ . We consider Rayleigh fading, i.e., $g_{ij} \sim \mathcal{CN}\left( 0, \varphi_{ij} \right)$, where $\varphi_{ij}$ is the average power gain of the corresponding channel. The direct link between S and D is ignored due to heavy shadowing or path-loss, i.e., $g_{SD}=0$.

\subsection{Signal Model}
\subsubsection{Power Control Method}
In order to manage the interference at the BS, the truncated channel inverse power control \cite{memmi2017power} is adopted at S. The transmit power of S can be expressed as
\begin{equation}
p_S = \min \left( \frac{\theta}{h_{SB}}, P_S \right),
\end{equation}
where $\theta$ is the maximum tolerable interference threshold predefined by the BS, $P_S$ is the maximum transmit power of S, $h_{ij}=\lvert g_{ij} \rvert^2$ denotes the instantaneous power gain of the channel between nodes $i$ and $j$.

\subsubsection{Statistical Model of Residual Self-Interference}
Following the previous work in \cite{krikidis2013full,rodriguez2014performance,rodriguez2014optimal}, we model the RSI at the CU, $v_C$, as an additive and Gaussian random variable,
\begin{equation}\label{rsi_model}
v_C \sim \mathcal{CN}(0,\beta p_C^\lambda),
\end{equation}
where $p_C \in [0, P_C]$ is the transmit power of the CU, $P_C$ is the maximum transmit power of the CU, $\beta \in [0, +\infty)$ and $\lambda \in [0, 1]$ reflect the performance of SIS. Define the Transmit-power-to-RSI ratio (TRR) as
\begin{equation}\label{trr}
{\rm{TRR}} = \frac{p_C}{\beta p_C^\lambda}.
\end{equation}
Unlike the RSI model in \cite{zhang2017full, zhang2017performance, zhang2015power, dun2017transmission, zhang2015using, dang2017outage,dang2017resource,ma2016matching}, we can see that the TRR is not constant, but an increasing function of $p_C$. The RSI model in (\ref{rsi_model}) incorporates two important cases: (i) $\lambda=0$ indicates a constant RSI level, in this case, the RSI behaves like the noise; (ii) $\lambda=1$, the RSI grows linearly with $p_C$, as in the aforementioned researches. As will be shown in the subsequent section, the value of $\lambda$ has a major influence on the system performance.
\subsubsection{Transmission Protocol}
To facilitate the depiction, we divide the transmission protocol into two concurrent phases.
\begin{itemize}
\item \textit{Phase-I}: In transmission period $t$, S sends a message $x_S(t)$ with power $p_S$ to the CU, the received signal at the CU is
\begin{equation}
y_C \left( t \right) = \sqrt{p_S} g_{SC} x_S \left( t \right) + v_C \left( t \right) + n_C \left( t \right),
\end{equation}
where $n_i (t) \sim \mathcal{CN}(0, \sigma_i^2)$ denotes the additive Gaussian white noise (AWGN) at node $i$.

\item \textit{Phase-II}: After decoding $y_C(t)$, the CU forms a broadcasting signal as
\begin{equation}
\bar{x}_C (t)=\sqrt{\alpha p_C} x_C (t) + \sqrt{(1-\alpha) p_C} \hat{x}_S (t-t_0),
\end{equation}
where $\alpha \in [0,1]$ is the power splitting factor which represents the proportion of the power allocated to $x_C$, $x_C$ is the uplink message of the CU, $\hat{x}_S$ is the decoded version of message $x_S$, and $t_0$ denotes the processing delay. Therefore, the received signals at the BS and the D2D receiver can be expressed as
\begin{equation}
y_B(t) = g_{CB} \bar{x}_C(t) + n_B(t)
\end{equation}
and
\begin{equation}
y_D(t) = g_{CD} \bar{x}_C(t) + n_D(t),
\end{equation}
respectively.
\end{itemize}

It should be emphasized that the FD nature of the CU makes \textit{Phase-I} and \textit{Phase-II} parallel in time at the cost of self-interference. If the CU operates in the HD mode, two orthogonal channels are required to separate \textit{Phase-I} and \textit{Phase-II}, which reduces the spectral efficiency.
\subsection{SINR model}
Conditioning on $h_{SB}$, the signal-to-interference-plus-noise ratio (SINR) at the CU to decode $x_S$ is
\begin{equation}
\gamma_{SC} = \frac{p_S h_{SC}}{\beta p_C^\lambda+\sigma_C^2} = 
	\begin{cases}
		\frac{\theta h_{SC}}{h_{SB} (\beta p_C^\lambda+\sigma_C^2)}, & h_{SB} \geq \frac{\theta}{P_S}, \\
		\frac{P_S h_{SC}}{\beta p_C^\lambda+\sigma_C^2},           & h_{SB} < \frac{\theta}{P_S}.
	\end{cases}
\end{equation}

The decoding order, which has significant impact on the power allocation scheme, plays a key role in the NOMA system. In this paper, we assign the DR to decode $\hat{x}_S$ with SIC. After receiving $y_D$, the DR first regards $\hat{x}_S$ as the noise and tries to decodes $x_C$. If the decoding is successful, the DR subtracts $x_C$ from $y_D$ and then decodes $\hat{x}_S$. The SINR at the DR to decode $x_C$ is
\begin{equation}
\gamma_{CD,C} = \frac{\alpha p_C h_{CD}}{(1-\alpha) p_C h_{CD} + \sigma_D^2},
\end{equation}
and the signal-to-noise ratio (SNR) to decode $\hat{x}_S$ after SIC is
\begin{equation}
\gamma_{CD,S} = \frac{(1-\alpha) p_C h_{CD}}{\sigma_D^2}.
\end{equation}

On the other hand, the BS treats $x_S$ and $\hat{x}_S$ as interference and directly decodes $x_C$, the SINR at the BS is
\begin{equation}
\begin{split}
\gamma_{CB} & = \frac{\alpha p_C h_{CB}}{p_S h_{SB} + (1-\alpha) p_C h_{CB} + \sigma_B^2} \\
            & =  
	\begin{cases}
		\frac{\alpha p_C h_{CB}}{\theta + (1-\alpha) p_C h_{CB} + \sigma_B^2}, & h_{SB} \geq \frac{\theta}{P_S}, \\
		\frac{\alpha p_C h_{CB}}{P_S h_{SB} + (1-\alpha) p_C h_{CB} + \sigma_B^2}, & h_{SB} < \frac{\theta}{P_S}.
	\end{cases}
\end{split}
\end{equation}
\section{Performance Analysis}\label{section_iii}
In this section, we provide the performance analyses from two perspectives. On one hand, The Pareto boundary of the achievable rate region is calculated along with the corresponding power allocation strategy. On the other hand, we characterize the network performance by the joint outage probability of the cellular uplink and the cooperative D2D links, of which the exact and asymptotic expressions are presented.

\subsection{Achievable Rate Region}
For a given power allocation scheme $(\alpha, p_C)$, we have the achievable rate of the cellular uplink channel from the CU to the BS as
\begin{equation}\label{R_B}
R_B(\alpha, p_C) = \log_2 (1+\gamma_{CB}),
\end{equation}
and the achievable rate of the dual-hop D2D relay channel as
\begin{equation}\label{R_D}
R_D(\alpha, p_C) = \min \left( R_{SC}(p_C), R_{CD,S}(\alpha, p_C) \right),
\end{equation}
where 
\begin{align}
R_{SC}(p_C)&=\log_2(1+\gamma_{SC}), \\
R_{CD,S}(\alpha, p_C)&=\log_2(1+\gamma_{CD,S}).
\end{align}

From (\ref{R_B}) and (\ref{R_D}), we can see a tradeoff between the achievable rates between the cellular uplink and the cooperative D2D channels. Let us first consider two extreme cases: (i) if $\alpha = 1$, the CU would assign full power (i.e., $p_C = P_C$) for uplink transmission and the D2D link suffers an outage; (ii) if $\alpha = 0$, the CU uses full power to relay the data from the DT to the DR, and the achievable rate of the cellular uplink is zero. Therefore, we have the power allocation schemes $(0, P_C)$ and $(1, P_C)$ at the extreme points of the Pareto boundary of the achievable rate region. The values of $(\alpha, p_C)$ on the Pareto boundary can be obtained by solving the following optimization problem
\begin{align}\label{OP1}
\mathcal{OP}1: \quad \max_{\alpha, p_C} & \quad R_D(\alpha, p_C) \nonumber \\
              s.t. & \quad R_B(\alpha, p_C) = \tilde{R}_B \\
                   & \quad 0 \le \alpha \le 1, 0 \le p_C \le P_C \nonumber
\end{align}
at an arbitrary achievable rate of the cellular uplink $\tilde{R}_B \in [0, R_B^{\max}]$, where $R_B^{\max} = \log_2 \left( 1 + \frac{P_C h_{CB}}{p_S h_{SB} + \sigma_B^2} \right)$ denotes the maximum achievable rate of the cellular uplink. To solve $\mathcal{OP}1$, we provide the following lemma, which will also be used in the rest of this paper.

\begin{lemma}\label{lemma_1}
For bounded $x \in \left( {{x_{min}},{x_{max}}} \right)$, if $f\left( x \right)$ is a bounded, continuous and monotonically increasing function, and $g\left( x \right)$ is a bounded, continuous and monotonically decreasing function,  $h\left( x \right) = \min \left( {f\left( x \right),g\left( x \right)} \right)$ will be quasi-concave.
\end{lemma}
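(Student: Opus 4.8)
The plan is to establish quasi-concavity through the standard characterization that a function is quasi-concave if and only if every upper level set is convex; on the one-dimensional domain $\left(x_{min}, x_{max}\right)$ this amounts to showing that each set $S_c = \left\{ x : h(x) \ge c \right\}$ is an interval. Since $h(x) = \min\left(f(x), g(x)\right)$, the condition $h(x) \ge c$ is equivalent to $f(x) \ge c$ \emph{and} $g(x) \ge c$, so $S_c$ is precisely the intersection of the two individual upper level sets of $f$ and of $g$.

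First I would use the monotonicity of each function to identify these two sets. Because $f$ is monotonically increasing, $\left\{ x : f(x) \ge c \right\}$ is a right-ray intersected with the domain, i.e. an interval of the form $\left[a_c, x_{max}\right)$ (allowing the empty set or the whole domain as degenerate cases). Symmetrically, because $g$ is monotonically decreasing, $\left\{ x : g(x) \ge c \right\}$ is a left-ray, an interval of the form $\left(x_{min}, b_c\right]$. The intersection of two intervals of the real line is again an interval (possibly empty), hence $S_c$ is an interval for every $c$, and $h$ is quasi-concave.

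A cleaner alternative I would keep in mind is to verify the defining inequality directly. Fixing $x_1 < x_2$ and writing $x = \lambda x_1 + (1-\lambda) x_2$ with $\lambda \in [0,1]$, so that $x_1 \le x \le x_2$, monotonicity gives $f(x) \ge f(x_1) \ge h(x_1)$ and $g(x) \ge g(x_2) \ge h(x_2)$. Thus both $f(x)$ and $g(x)$ are bounded below by $\min\left(h(x_1), h(x_2)\right)$, and therefore so is their minimum $h(x)$; this is exactly the quasi-concavity inequality.

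The main point to watch is that the conclusion rests \emph{solely} on monotonicity: the continuity and boundedness hypotheses play no role in quasi-concavity itself and are presumably invoked elsewhere (for instance to guarantee that the crossing point of $f$ and $g$ is attained, so that $\mathcal{OP}1$ admits a well-defined, unique maximizer). Consequently there is no substantive analytical obstacle here; the only thing requiring care is assigning the correct direction of monotonicity to each function, so that the two level sets combine into a single interval rather than its complement.
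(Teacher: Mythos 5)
Your proof is correct. Note, however, that the paper does not actually prove Lemma~\ref{lemma_1} itself: it simply defers to the reference \cite{dang2017outage}, so there is no in-paper argument to compare against. Your two arguments --- the upper-level-set characterization (the set $\{x : h(x)\ge c\}$ is the intersection of a right-ray from the increasing $f$ and a left-ray from the decreasing $g$, hence an interval) and the direct verification that $h(\lambda x_1+(1-\lambda)x_2)\ge\min(h(x_1),h(x_2))$ via $f(x)\ge f(x_1)\ge h(x_1)$ and $g(x)\ge g(x_2)\ge h(x_2)$ --- are both standard, complete, and supply exactly what the citation leaves implicit. Your closing observation is also on point: quasi-concavity follows from monotonicity alone, while continuity and boundedness are what the paper actually needs downstream (e.g., to guarantee that the crossing point $F_1(\hat p_C)=0$ in Theorem~\ref{theorem_1}, and the analogous roots $\breve p_C$ and $\breve\alpha$ in Section~\ref{section_iv}, exist and are attained in the feasible interval). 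No gaps.
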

\begin{proof}
See \cite{dang2017outage} for the proof of Lemma \ref{lemma_1}.
\end{proof}
With the help of Lemma \ref{lemma_1}, we can prove that $\mathcal{OP}1$ is quasi-concave \cite{quasi_concave_ref}, and the solution is offered in Theorem \ref{theorem_1}.
\begin{theorem}\label{theorem_1}
The power allocation $(\tilde \alpha, \tilde{p}_C)$ that achieves the Pareto boundary of the rate region satisfies
\begin{equation}
\tilde{\alpha} = (1-2^{-\tilde{R}_B}) \left( 1 + \frac{p_S h_{SB} + \sigma_B^2}{\tilde{p}_C h_{CB}} \right)
\end{equation}
and
\begin{equation}
\tilde{p}_C = 
				\begin{cases}
					\frac{(p_S h_{SB} + \sigma_B^2)(2^{\tilde{R}_B} - 1)}{ h_{CB}}, & F_1 \left(\frac{(p_S h_{SB} + \sigma_B^2)(2^{\tilde{R}_B} - 1)}{ h_{CB}} \right) \geq 0, \\
					P_C, & F_1(P_C) \leq 0, \\
					\hat{p}_C, & otherwise,
				\end{cases}
\end{equation}
where $F_1(x) = 2^{-\tilde{R}_B} h_{CD} h_{CB} \beta x^{1+\lambda} + 2^{-\tilde{R}_B} h_{CD} h_{CB} \sigma_C^2 x - (1 - 2^{-\tilde{R}_B}) (p_S h_{SB} + \sigma_B^2) h_{CD} \beta x^{\lambda} - (1 - 2^{-\tilde{R}_B}) (p_S h_{SB} + \sigma_B^2) h_{CD}\sigma_C^2 - p_S h_{SC} \sigma_D^2$, and $\hat{p}_C$ satisfies $F_1(\hat{p}_C)=0$.
\end{theorem}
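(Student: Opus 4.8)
The plan is to eliminate the equality constraint, reduce $\mathcal{OP}1$ to a one-dimensional problem in $p_C$, and then apply Lemma \ref{lemma_1}. First I would solve the constraint $R_B(\alpha,p_C)=\tilde R_B$ for $\alpha$. Writing it as $\gamma_{CB}=2^{\tilde R_B}-1$ and inserting the expression for $\gamma_{CB}$, the equation is linear in $\alpha$ and solves to exactly the claimed $\tilde\alpha$. Since $\alpha$ is now pinned to $p_C$, this both produces the stated formula for $\tilde\alpha$ and fixes the feasible range of $p_C$: the requirement $\tilde\alpha\le 1$ is equivalent to $p_C\ge \frac{(p_S h_{SB}+\sigma_B^2)(2^{\tilde R_B}-1)}{h_{CB}}$, which is precisely the lower endpoint appearing in the first case, while $\tilde\alpha\ge 0$ holds automatically. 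Hence $\mathcal{OP}1$ collapses to maximizing $R_D=\min\left(R_{SC}(p_C),\,R_{CD,S}(\tilde\alpha(p_C),p_C)\right)$ over the interval $[\underline p_C, P_C]$ with $\underline p_C=\frac{(p_S h_{SB}+\sigma_B^2)(2^{\tilde R_B}-1)}{h_{CB}}$.

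Next I would establish the monotonicity needed to invoke Lemma \ref{lemma_1}. Here $R_{SC}$ depends on $p_C$ only through the denominator $\beta p_C^\lambda+\sigma_C^2$, which is nondecreasing, so $R_{SC}$ is monotonically decreasing in $p_C$. For $R_{CD,S}$ the argument is less immediate because $\alpha$ itself varies with $p_C$; I would substitute $\tilde\alpha(p_C)$ into the product $(1-\alpha)p_C$ and show it is an increasing affine function of $p_C$, so that $\gamma_{CD,S}$, and hence $R_{CD,S}$, is monotonically increasing. With one function decreasing and the other increasing, Lemma \ref{lemma_1} yields that $R_D$ is quasi-concave on $[\underline p_C, P_C]$, which also justifies the earlier claim that $\mathcal{OP}1$ is quasi-concave.

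I would then locate the maximizer of this min of an increasing and a decreasing curve. The maximum is attained where the two curves cross, i.e. where $R_{SC}=R_{CD,S}$, provided that crossing lies inside $[\underline p_C, P_C]$; otherwise it sits at an endpoint. Clearing the (positive) denominators in $R_{SC}=R_{CD,S}$ converts the crossing equation into $F_1(p_C)=0$, so the sign of $F_1$ reports which side of the crossing a given $p_C$ lies on and thereby selects the active branch of the $\min$. This produces the three cases: the interior root $\hat p_C$ when the crossing is feasible, the upper boundary $P_C$ when $F_1(P_C)\le 0$ (the increasing branch $R_{CD,S}$ stays binding all the way to $P_C$), and the lower-endpoint value $\underline p_C$ in the remaining situation. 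Because $\gamma_{CD,S}-\gamma_{SC}$ is strictly monotone in $p_C$, $F_1$ has at most one zero on the interval, which gives uniqueness of the optimizer.

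The main obstacle I anticipate is the monotonicity of $R_{CD,S}$ after substitution: since the constraint couples $\alpha$ and $p_C$, I must verify that the composite $R_{CD,S}(\tilde\alpha(p_C),p_C)$ is genuinely increasing in $p_C$ rather than inheriting nonmonotone behaviour from $\tilde\alpha$, and I must pin down the exact feasible interval and match its endpoints to the sign conditions on $F_1$ so that the three-way split is exhaustive and correctly labelled. The remaining algebra, namely solving the linear constraint for $\tilde\alpha$ and expanding $R_{SC}=R_{CD,S}$ into $F_1$, is routine.
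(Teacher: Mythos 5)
Your proposal follows essentially the same route as the paper's proof: eliminate $\alpha$ via the rate constraint, reduce $\mathcal{OP}1$ to a one-dimensional max-min over $p_C\in\left[\underline{p}_C,P_C\right]$ with $\underline{p}_C=\frac{(p_S h_{SB}+\sigma_B^2)(2^{\tilde R_B}-1)}{h_{CB}}$, invoke Lemma~\ref{lemma_1} on the decreasing $R_{SC}(p_C)$ and the increasing composite $R_{CD,S}(\tilde\alpha(p_C),p_C)$ (whose argument is affine in $p_C$ after substitution, exactly as in the paper's equation~(\ref{revised_R_CD_S})), and split into the three cases by the sign of $F_1$ at the two endpoints. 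The only detail the paper adds that you leave implicit is that $\underline{p}_C\le P_C$, i.e.\ the feasible interval is nonempty, which is guaranteed by $\tilde R_B\le R_B^{\max}$.
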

\begin{proof}
See Appendix \ref{proof_of_theorem_1}.
\end{proof}

\subsection{Joint Outage Probability}
An outage event occurs when neither the BS nor the DR can decode its desired message correctly. From an information-theoretic viewpoint, when the channel capacity cannot support a target rate, the failure of decoding at the receiver is doomed. Hence, the joint outage probability can be expressed by
\begin{equation}\label{jop}
\textrm{P}_{out} = \mathbb{P} \{ R_B < \eta_B, R_D < \eta_D \},
\end{equation}
where $\eta_i$ is the target rate predefined at node $i \in \{\textrm{B,D}\}$ according to a certain QoS requirement. Since the achievable rate is a monotonically increasing function of SINR, (\ref{jop}) can be rewritten as
\begin{align}
\textrm{P}_{out} & =  \mathbb{P} \{ \gamma_{CB} < \xi_B, \gamma_{CD,C} < \xi_B \bigcup \min \left( \gamma_{SC}, \gamma_{CD,S} \right) < \xi_D \} \nonumber \\
& =  1 - \mathbb{P} \{  \gamma_{CB} \geq \xi_B, \gamma_{SC} \geq \xi_D, \nonumber \gamma_{CD,C}\geq\xi_B, \gamma_{CD,S} \geq \xi_D \}
\end{align}
where $\xi_i = 2^{\eta_i}-1$, $i \in \{B,D\}$.

The exact joint outage probability is given in Theorem \ref{theorem_2}.
\begin{theorem}\label{theorem_2}
The joint outage probability of the cellular uplink and the cooperative D2D channel is
\begin{equation}
\rm{P}_{out} = 
	\begin{cases}
		1, & \alpha \le \frac{\xi_B}{1+\xi_B}, \\
		1-(\rm{P}_1 + \rm{P}_2)\rm{P}_3, & \frac{\xi_B}{1+\xi_B} < \alpha \leq 1,
	\end{cases}
\end{equation}
where
\begin{align}\label{P1}
\rm{P}_1 & = \frac{\varphi_{SC} \theta \exp\bigg[ -\frac{\theta}{P_S} \left( \frac{\xi_D (\beta p_C^\lambda+\sigma_C^2)}{\varphi_{SC} \theta} + \frac{1}{\varphi_{SB}}\right) \bigg]}{\varphi_{SB} \xi_D (\beta p_C^\lambda + \sigma_C^2) +\varphi_{SC}\theta} \exp \bigg[ -\frac{\xi_B (\theta + \sigma_B^2)}{\varphi_{CB} p_C (\alpha-\xi_B+\alpha \xi_B)} \bigg],
\end{align}
\begin{align}\label{P2}
\rm{P}_2 & = \frac{\varphi_{CB} p_C (\alpha-\xi_B+\alpha \xi_B) }{\varphi_{SB} \xi_B P_S + \varphi_{CB} p_C (\alpha-\xi_B+\alpha \xi_B)} \exp \bigg[-\frac{\xi_D (\beta p_C^\lambda + \sigma_C^2)}{\varphi_{SC} P_S}-\frac{\xi_B \sigma_B^2}{\varphi_{CB} p_C (\alpha - \xi_B +\alpha \xi_B)} \bigg] \nonumber \\
& \quad \quad \times \Bigg( 1 - \exp \bigg[ -\frac{\theta}{P_S} \big( \frac{\xi_B P_S}{\varphi_{CB} p_C (\alpha-\xi_B + \alpha \xi_B)} + \frac{1}{\varphi_{SB}} \big) \bigg]  \Bigg),
\end{align}
and
\begin{align}\label{P3}
\rm{P}_3  = 
         	\begin{cases}
         		\exp \big[ -\frac{\xi_B \sigma_D^2}{\varphi_{CD} p_C (\alpha-\xi_B+\alpha \xi_B)} \big], & \frac{\xi_B}{1+\xi_B} < \alpha \leq \frac{\xi_B \xi_D + \xi_B}{\xi_B \xi_D + \xi_B + \xi_D}, \\
         		\exp \big[ -\frac{\xi_D \sigma_D^2}{\varphi_{CD} p_C (1 - \alpha)} \big], &  \frac{\xi_B \xi_D + \xi_B}{\xi_B \xi_D + \xi_B + \xi_D} < \alpha \leq 1.
         	\end{cases}
\end{align}
\end{theorem}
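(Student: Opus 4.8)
The plan is to start from the complementary form
\begin{equation}
\mathrm{P}_{out} = 1 - \mathbb{P}\{\gamma_{CB}\ge\xi_B,\,\gamma_{SC}\ge\xi_D,\,\gamma_{CD,C}\ge\xi_B,\,\gamma_{CD,S}\ge\xi_D\}
\end{equation}
and to exploit the fact that the four SINRs split into two statistically independent groups of channel gains: $\gamma_{CD,C}$ and $\gamma_{CD,S}$ involve only $h_{CD}$, whereas $\gamma_{CB}$ and $\gamma_{SC}$ involve only $h_{SB}$, $h_{SC}$ and $h_{CB}$. By independence of the Rayleigh gains, the success probability factorizes as $(\mathrm{P}_1+\mathrm{P}_2)\,\mathrm{P}_3$, where $\mathrm{P}_3=\mathbb{P}\{\gamma_{CD,C}\ge\xi_B,\gamma_{CD,S}\ge\xi_D\}$ collects the D2D terms and $\mathrm{P}_1+\mathrm{P}_2=\mathbb{P}\{\gamma_{CB}\ge\xi_B,\gamma_{SC}\ge\xi_D\}$ collects the cellular terms. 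Before this step I would dispose of the degenerate regime: rearranging $\gamma_{CD,C}\ge\xi_B$ gives $p_C h_{CD}(\alpha-\xi_B+\alpha\xi_B)\ge\xi_B\sigma_D^2$, so when $\alpha\le\frac{\xi_B}{1+\xi_B}$ the coefficient $\alpha-\xi_B+\alpha\xi_B$ is nonpositive, the event is impossible, and $\mathrm{P}_{out}=1$. This yields the first branch and explains why $\alpha>\frac{\xi_B}{1+\xi_B}$ is assumed in the remainder.

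For $\mathrm{P}_3$ I would rewrite both D2D conditions as lower bounds on the single variable $h_{CD}$: $\gamma_{CD,C}\ge\xi_B$ becomes $h_{CD}\ge\frac{\xi_B\sigma_D^2}{p_C(\alpha-\xi_B+\alpha\xi_B)}$, while $\gamma_{CD,S}\ge\xi_D$ becomes $h_{CD}\ge\frac{\xi_D\sigma_D^2}{p_C(1-\alpha)}$. The joint event is therefore $h_{CD}\ge\max(\cdot,\cdot)$, and comparing the two thresholds splits exactly on the constant $\frac{\xi_B\xi_D+\xi_B}{\xi_B\xi_D+\xi_B+\xi_D}$ appearing in the statement. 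Applying the exponential complementary CDF $\mathbb{P}\{h_{CD}\ge t\}=\exp(-t/\varphi_{CD})$ to whichever threshold dominates gives the two-branch expression for $\mathrm{P}_3$ directly.

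The cellular factor $\mathrm{P}_1+\mathrm{P}_2$ is where the work concentrates, because the truncated channel-inversion power control couples $h_{SB}$ into both remaining SINRs. I would condition on the power-control regime: $\mathrm{P}_1$ is the case $h_{SB}\ge\theta/P_S$ (so $p_S=\theta/h_{SB}$) and $\mathrm{P}_2$ the case $h_{SB}<\theta/P_S$ (so $p_S=P_S$). In the $\mathrm{P}_1$ branch, $\gamma_{CB}\ge\xi_B$ reduces to a threshold on $h_{CB}$ alone, which factors out as the last exponential in $\mathrm{P}_1$; the surviving coupling is $\gamma_{SC}\ge\xi_D\Leftrightarrow h_{SC}\ge\frac{\xi_D(\beta p_C^\lambda+\sigma_C^2)}{\theta}h_{SB}$, so I integrate $\mathbb{P}\{h_{SC}\ge a\,h_{SB}\}$ against the density of $h_{SB}$ over $[\theta/P_S,\infty)$, a single exponential integral. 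In the $\mathrm{P}_2$ branch the roles swap: $\gamma_{SC}\ge\xi_D$ now decouples into a threshold on $h_{SC}$ (yielding the $\exp[-\xi_D(\beta p_C^\lambda+\sigma_C^2)/(\varphi_{SC}P_S)]$ factor), while $\gamma_{CB}\ge\xi_B$ becomes $h_{CB}\ge\frac{\xi_B(P_S h_{SB}+\sigma_B^2)}{p_C(\alpha-\xi_B+\alpha\xi_B)}$, an affine-in-$h_{SB}$ bound integrated over $[0,\theta/P_S)$; the finite upper limit is precisely what produces the $(1-\exp[\cdots])$ factor in $\mathrm{P}_2$.

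The main obstacle is the bookkeeping in these two cellular integrals: each is a product of two exponentials in $h_{SB}$, and collecting the prefactors into the compact rational forms $\frac{\varphi_{SC}\theta}{\varphi_{SB}\xi_D(\beta p_C^\lambda+\sigma_C^2)+\varphi_{SC}\theta}$ and $\frac{\varphi_{CB}p_C(\alpha-\xi_B+\alpha\xi_B)}{\varphi_{SB}\xi_B P_S+\varphi_{CB}p_C(\alpha-\xi_B+\alpha\xi_B)}$ requires careful clearing of denominators. By contrast, the independence factorization and the $\mathrm{P}_3$ computation are essentially immediate once the threshold comparisons are set up, so the real effort is the $\mathrm{P}_1$/$\mathrm{P}_2$ integration against the coupled $h_{SB}$ statistics.
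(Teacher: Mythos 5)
Your proposal is correct and follows essentially the same route as the paper's proof in Appendix B: dispose of the $\alpha \le \frac{\xi_B}{1+\xi_B}$ regime via the sign of $\alpha-\xi_B+\alpha\xi_B$, factor the success probability into the $h_{CD}$-only term $\mathrm{P}_3$ (computed as the max of two thresholds, splitting at $\frac{\xi_B\xi_D+\xi_B}{\xi_B\xi_D+\xi_B+\xi_D}$) times the cellular term, and split the latter into $\mathrm{P}_1$ and $\mathrm{P}_2$ by conditioning on the power-control regime $h_{SB}\gtrless\theta/P_S$ and integrating the single coupled exponential over $h_{SB}$. The decoupling you identify in each branch (threshold on $h_{CB}$ alone in $\mathrm{P}_1$, on $h_{SC}$ alone in $\mathrm{P}_2$) is exactly the paper's $\mathrm{Q}_1$--$\mathrm{Q}_4$ decomposition.
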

\begin{proof}
See Appendix \ref{proof_of_theorem_2}.
\end{proof}

\newtheorem{corollary}{Corollary}
\begin{corollary}\label{corollary_of_theorem_2}
When $p_C$ approaches infinity, the asymptotic joint outage probability is
\begin{equation}\label{asymp_Pout}
\lim_{p_C \rightarrow +\infty} \rm{P}_{out} = 1 - \bigg[ 1-\exp \big( -\frac{\theta}{\varphi_{SB} P_S} \big) \bigg] \exp \big( -\frac{\xi_D \beta p_C^\lambda}{\varphi_{SC} P_S} \big).
\end{equation}
\end{corollary}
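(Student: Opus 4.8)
The plan is to start from the nontrivial branch of Theorem~\ref{theorem_2}, namely $\mathrm{P}_{out}=1-(\mathrm{P}_1+\mathrm{P}_2)\mathrm{P}_3$ valid for $\frac{\xi_B}{1+\xi_B}<\alpha\le 1$, and to evaluate the limit factor by factor as $p_C\to+\infty$. Throughout I would assume $\lambda\in(0,1]$, so that the residual self-interference power $\beta p_C^\lambda$ diverges; this is the regime in which the limit is interesting, and it drives every simplification below. Since $\alpha-\xi_B+\alpha\xi_B=\alpha(1+\xi_B)-\xi_B>0$ on this branch, the coefficient of $p_C$ in each fraction is strictly positive, which is the structural fact I will use repeatedly.

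First I would show $\mathrm{P}_1\to 0$. Rewriting the numerator exponential in (\ref{P1}) as $\exp[-\frac{\xi_D(\beta p_C^\lambda+\sigma_C^2)}{\varphi_{SC}P_S}]\exp[-\frac{\theta}{\varphi_{SB}P_S}]$, the numerator of $\mathrm{P}_1$ stays bounded (a constant times decaying exponentials times the factor $\exp[-\frac{\xi_B(\theta+\sigma_B^2)}{\varphi_{CB}p_C(\alpha-\xi_B+\alpha\xi_B)}]\to1$), whereas the denominator $\varphi_{SB}\xi_D(\beta p_C^\lambda+\sigma_C^2)+\varphi_{SC}\theta$ diverges like $p_C^\lambda$; hence the quotient vanishes. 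Next I would observe that both branches of $\mathrm{P}_3$ in (\ref{P3}) have exponents of the form $-\,\text{const}/p_C\to0$, so $\mathrm{P}_3\to1$ irrespective of which subinterval of $\alpha$ applies.

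The dominant contribution comes from $\mathrm{P}_2$. In (\ref{P2}) the leading fraction $\frac{\varphi_{CB}p_C(\alpha-\xi_B+\alpha\xi_B)}{\varphi_{SB}\xi_B P_S+\varphi_{CB}p_C(\alpha-\xi_B+\alpha\xi_B)}\to1$, the term $\frac{\xi_B\sigma_B^2}{\varphi_{CB}p_C(\alpha-\xi_B+\alpha\xi_B)}$ in the first exponent vanishes, and the inner exponent $\frac{\theta}{P_S}\frac{\xi_B P_S}{\varphi_{CB}p_C(\alpha-\xi_B+\alpha\xi_B)}$ of the last factor also vanishes, leaving $1-\exp[-\frac{\theta}{\varphi_{SB}P_S}]$. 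This gives $\mathrm{P}_2\to(1-\exp[-\frac{\theta}{\varphi_{SB}P_S}])\exp[-\frac{\xi_D(\beta p_C^\lambda+\sigma_C^2)}{\varphi_{SC}P_S}]$. Substituting $\mathrm{P}_1\to0$, $\mathrm{P}_3\to1$ and this expression into $1-(\mathrm{P}_1+\mathrm{P}_2)\mathrm{P}_3$ yields the stated formula, up to the treatment of $\sigma_C^2$ discussed next. It is worth noting that all $\alpha$-dependence disappears in the limit, consistent with the fact that the right-hand side of (\ref{asymp_Pout}) is free of $\alpha$.

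The main subtlety --- and the step I would flag as the crux --- is the identification of $\beta p_C^\lambda+\sigma_C^2$ with $\beta p_C^\lambda$ in the surviving exponent. Strictly, the limit of $\mathrm{P}_2$ retains the constant factor $\exp[-\frac{\xi_D\sigma_C^2}{\varphi_{SC}P_S}]$, so (\ref{asymp_Pout}) should be read as the leading-order asymptotic equivalence in the RSI-dominated regime, where $\beta p_C^\lambda\gg\sigma_C^2$ justifies dropping the noise term inside the exponent; this is also why the notation $\lim_{p_C\to+\infty}$ is applied to an expression that still depends on $p_C$. Physically this matches the SINR picture: as $p_C\to+\infty$ the constraints $\gamma_{CB}\ge\xi_B$, $\gamma_{CD,C}\ge\xi_B$ and $\gamma_{CD,S}\ge\xi_D$ are met with probability approaching one, so the sole surviving bottleneck is $\gamma_{SC}\ge\xi_D$ under the power-control branch $h_{SB}<\theta/P_S$ (i.e.\ $p_S=P_S$), whose probability is $[1-\exp(-\frac{\theta}{\varphi_{SB}P_S})]\exp(-\frac{\xi_D\beta p_C^\lambda}{\varphi_{SC}P_S})$ after discarding $\sigma_C^2$. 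I would also note the excluded case $\lambda=0$, where $\beta p_C^\lambda$ is constant, $\mathrm{P}_1$ no longer vanishes, and a separate, non-degenerate limit holds.
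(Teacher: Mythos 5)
Your term-by-term limit of $\mathrm{P}_1$, $\mathrm{P}_2$ and $\mathrm{P}_3$ from Theorem~\ref{theorem_2} is exactly the ``straightforward'' computation the paper omits, and each step checks out: $\mathrm{P}_1$ is negligible against $\mathrm{P}_2$ because it carries the same exponential $\exp[-\xi_D(\beta p_C^\lambda+\sigma_C^2)/(\varphi_{SC}P_S)]$ but an extra $O(p_C^{-\lambda})$ prefactor, $\mathrm{P}_3\to 1$, and the surviving factor of $\mathrm{P}_2$ reproduces the right-hand side of (\ref{asymp_Pout}). Your two caveats --- that the stated formula silently drops the constant $\exp[-\xi_D\sigma_C^2/(\varphi_{SC}P_S)]$, and that for $\lambda=0$ the term $\mathrm{P}_1$ does not vanish, so (\ref{asymp_Pout}) should be read as a leading-order asymptotic for $\lambda>0$ rather than a literal limit --- are both correct and in fact sharpen the corollary as written.
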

\begin{proof}
The proof of Corollary \ref{corollary_of_theorem_2} is straightforward, and thus is omitted.
\end{proof}

Corollary \ref{corollary_of_theorem_2} reveals that: (i) the proposed cooperative D2D scheme achieves zero-diversity. On one hand, when $\lambda=0$, $\lim_{p_C \rightarrow +\infty} \textrm{P}_{out} = 1 - \big[ 1-\exp \big( -\frac{\theta}{\varphi_{SB} P_S} \big) \big]$; on the other hand, when $\lambda \neq 0$, $\lim_{p_C \rightarrow +\infty} \textrm{P}_{out} = 1$, which implies that the considered network suffers an outage floor. (ii) the asymptotic outage performance is limited by the first-hop D2D transmission from the DT to the CU, or essentially, the RSI; (iii) the asymptotic joint outage probability is independent of the power splitting factor $\alpha$, since $\alpha$ does not impact the communication between the DT and the CU.

\section{Optimal Power Allocation for Maximizing the Minimum Achievable Rate}\label{section_iv}
In this section, the power allocation scheme $(\alpha, p_C)$ is investigated to optimize the achievable rates, taking consideration of fairness between the cellular uplink and cooperative D2D channels. The max-min criteria \cite{radunovic2007unified} is adopted to characterize the fairness, i.e., we try to find the jointly optimal $(\alpha, p_C)$ which maximizes the achievable rate of the bottleneck link. With the knowledge of global CSI, we first formulate the max-min problem. Then we discuss the conditions under which $(\alpha, p_C)$ is optimal. In the end, we prove that the problem of the max-min achievable rate is quasi-concave and provide the optimal solution.
\subsection{Problem Formulation}
When global CSI is available at the CU, the maximization problem of the minimum achievable rate for a given power allocation $(\alpha, p_C)$ can be formulated as
\begin{align}
\mathcal{OP}2: \quad  \max_{\alpha, p_C}  R_{\min}(\alpha, p_C) & = \min \Big( R_B(\alpha, p_C), R_D(\alpha, p_C) \Big) \nonumber \\
                     s.t.  \quad & 0 \leq \alpha \leq 1,  \\
                    \quad  \quad\quad 0 & \leq p_C \leq P_C,  \nonumber
\end{align}
where $R_B(\alpha, p_C)$ and $R_D(\alpha, p_C)$ are given in (\ref{R_B}) and (\ref{R_D}), respectively.
\subsection{Problem Analysis}
A key step to solve $\mathcal{OP}2$ is investigating the relation of ${R_{SC}}\left( {\alpha ,{p_C}} \right)$ and ${R_{CD,S}}\left( {\alpha ,{p_C}} \right)$ at the optimal power allocation $\left( {{\alpha ^*},p_C^*} \right)$, which is provided in the following lemma.
\begin{lemma}\label{lemma_2}
The jointly optimal $\left( {{\alpha ^*},p_C^*} \right)$ satisfies
\begin{equation}\label{constraint_of_case_1}
{R_{SC}}\left( {{\alpha ^*},p_C^*} \right) \le {R_{CD,S}}\left( {{\alpha ^*},p_C^*} \right)
\end{equation}
for ${\bar p_C} < {P_C}$, and
\begin{equation}\label{constraint_of_case_2}
{R_{SC}}\left( {{\alpha ^*},p_C^*} \right) \ge {R_{CD,S}}\left( {{\alpha ^*},p_C^*} \right)
\end{equation}
for ${\bar p_C} \ge {P_C}$, where ${\bar p_C}$ satisfies $F_2({\bar p_C}) = 0$ and $F_2(x) = h_{CD} \beta x^{1 + \lambda} +h_{CD} \sigma_C^2 x - \sigma_D^2 p_S h_{SC}$.
\end{lemma}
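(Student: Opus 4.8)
The plan is to read the zero of $F_2$ as a statement about the two D2D hop rates and then treat the two power regimes separately. Note first that $F_2(x)=0$ is exactly the condition $\gamma_{SC}(x)=\gamma_{CD,S}(0,x)$, i.e. the power at which the first-hop rate $R_{SC}$ equals the second-hop rate $R_{CD,S}$ evaluated at $\alpha=0$ (all relay power devoted to forwarding). Since every positive-degree coefficient of $F_2$ is positive while the constant term $-\sigma_D^2 p_S h_{SC}$ is negative, $F_2$ is strictly increasing on $(0,\infty)$ with a unique root, so $\bar p_C$ is well defined. I would then record the elementary monotonicities needed: $R_{SC}(p_C)$ is non-increasing in $p_C$ (the RSI term $\beta p_C^\lambda$ inflates the denominator of $\gamma_{SC}$) and independent of $\alpha$; $R_{CD,S}(\alpha,p_C)$ is strictly increasing in $p_C$ and strictly decreasing in $\alpha$; and $R_B(\alpha,p_C)$ is strictly increasing in both arguments. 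All follow by differentiating the SINR expressions of Section~\ref{section_ii}.

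For the regime $\bar p_C\ge P_C$ the statement is a \emph{global} fact requiring no optimality. Set $\phi(p_C)=R_{CD,S}(0,p_C)-R_{SC}(p_C)$; since $R_{CD,S}(0,\cdot)$ is strictly increasing and $R_{SC}$ non-increasing, $\phi$ is strictly increasing and vanishes exactly at $\bar p_C$. Hence $\phi(p_C)\le 0$ for every $p_C\le P_C\le \bar p_C$, and combining with $R_{CD,S}(\alpha,p_C)\le R_{CD,S}(0,p_C)$ gives $R_{CD,S}(\alpha,p_C)\le R_{SC}(p_C)$ for \emph{all} feasible pairs, in particular at $(\alpha^*,p_C^*)$, which is precisely (\ref{constraint_of_case_2}).

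For the regime $\bar p_C< P_C$ I would argue (\ref{constraint_of_case_1}) by contradiction through an exchange argument. Suppose $R_{SC}(p_C^*)>R_{CD,S}(\alpha^*,p_C^*)$, so the D2D rate is second-hop limited, $R_D=R_{CD,S}$. Step (i): if $p_C^*<P_C$, a small increase of $p_C$ raises both $R_B$ and $R_{CD,S}$ while $R_{SC}$ only decreases and, by continuity of the strict inequality, stays above $R_{CD,S}$; thus $R_{\min}=\min(R_B,R_{CD,S})$ strictly increases, contradicting optimality and forcing $p_C^*=P_C$. Step (ii): at $p_C^*=P_C>\bar p_C$ we have $\phi(P_C)>0$, so $R_{CD,S}(0,P_C)>R_{SC}(P_C)>R_{CD,S}(\alpha^*,P_C)$, leaving slack to lower $\alpha$; if moreover $R_B$ is \emph{not} binding at the optimum (i.e. $R_B(\alpha^*,P_C)>R_{CD,S}(\alpha^*,P_C)$), a small decrease of $\alpha$ keeps $R_B$ above and raises $R_{CD,S}=R_D$, again contradicting optimality.

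The main obstacle is the remaining Case-1 configuration in which, at $p_C^*=P_C$, the cellular link is itself the bottleneck and balances the second hop, $R_B(\alpha^*,P_C)=R_{CD,S}(\alpha^*,P_C)<R_{SC}(P_C)$. There the single-variable perturbations cancel: lowering $\alpha$ raises $R_{CD,S}$ but lowers the binding $R_B$, and $R_B$ cannot be increased since $p_C$ is already at its ceiling. Ruling this out is the delicate step and must use the full two-dimensional first-order (stationarity) conditions together with $\bar p_C<P_C$ to show that whenever $R_B$ binds at $p_C=P_C$ the relation $R_{SC}\le R_{CD,S}$ must already hold. Establishing this structural fact, rather than the routine monotonicity bookkeeping of Steps (i)--(ii), is where the real work lies, and it is the part I would expect to occupy most of the appendix.
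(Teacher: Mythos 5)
Your reading of $F_2$ as the balance condition $R_{SC}(p_C)=R_{CD,S}(0,p_C)$, the uniqueness of $\bar p_C$, and the monotonicity inventory are all correct. For $\bar p_C\ge P_C$ your argument is right and in fact sharper than the paper's: you establish (\ref{constraint_of_case_2}) for \emph{every} feasible pair, whereas Appendix~\ref{appendix_a} only gestures at ``some algebraic deduction.'' For $\bar p_C<P_C$ your steps (i) and (ii) reproduce the paper's exchange argument; note that the paper's entire proof of (\ref{constraint_of_case_1}) is your step (i) -- it perturbs $p_C^*$ upward by $\Delta p_C$ -- and it silently assumes this perturbation is feasible, i.e.\ that $p_C^*<P_C$ (the stated bound $0<\Delta p_C<P_C-\bar p_C$ does not guarantee $p_C^*+\Delta p_C\le P_C$). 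So the configuration you isolate as ``the delicate step'' ($p_C^*=P_C$ with $R_B$ binding and balancing $R_{CD,S}$ below $R_{SC}$) is precisely the case the paper's proof skips.

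The genuine gap is that this residual case cannot be closed by any stationarity argument, because the claim (\ref{constraint_of_case_1}) is false there. Take $\sigma_B^2=\sigma_C^2=\sigma_D^2=1$, $p_Sh_{SB}\approx 0$, $\beta$ negligibly small, $p_Sh_{SC}=10$, $P_Ch_{CD}=100$ and $P_Ch_{CB}=1$. Then $\bar p_C\approx P_C/10<P_C$, yet $\gamma_{CB}<P_Ch_{CB}=1$ at every feasible point, so $\max_{\alpha,p_C}\min(R_B,R_{CD,S})$ is attained at $p_C^*=P_C$, $\alpha^*\approx 0.99$, where $R_B=R_{CD,S}=\log_2(1.98)\approx 0.99$ while $R_{SC}=\log_2(11)\approx 3.46$; restricting to the region $R_{SC}\le R_{CD,S}$ forces $\alpha\le 0.9$ and yields at best $\log_2(1+0.9/1.1)\approx 0.86$. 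Hence the jointly optimal point violates (\ref{constraint_of_case_1}) even though $\bar p_C<P_C$. Your instinct that the two-dimensional analysis at the corner $p_C=P_C$ is ``where the real work lies'' is exactly right, but the work cannot succeed as stated: the lemma needs an extra hypothesis excluding the cellular-limited corner (e.g.\ that the $\alpha$ balancing $R_B(\cdot,P_C)$ against $R_{CD,S}(\cdot,P_C)$ does not exceed $\bar\alpha$ evaluated at $P_C$), or Case~1 of Section~\ref{section_iv} must be augmented with the subcase $\min\bigl(R_B(\alpha,P_C),R_{CD,S}(\alpha,P_C)\bigr)$ maximized over $\alpha\in[\bar\alpha,1]$.
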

\begin{proof}
See Appendix \ref{appendix_a} for the proof of Lemma 1.
\end{proof}

\subsection{Optimal Power Allocation}
According to the relation between ${\bar p_C}$ and ${P_C}$ stated in Lemma \ref{lemma_2}, the discussion of $\mathcal{OP}2$ can be divided into the following cases.

\subsubsection{Case 1} ${\bar p_C} < {P_C}$.

Based on Lemma 2, $\mathcal{OP}2$ can be equivalently reformulated as
\begin{align}
{\mathcal{OP}}2a:\quad & \mathop {\max }\limits_{\alpha ,{p_C}} {R_{\min }}\left( {\alpha ,{p_C}} \right)  = \min \left( {{R_B}\left( {\alpha ,{p_C}} \right),{R_{SC}}\left( {{p_C}} \right)} \right) \nonumber \\
\quad \quad \;\; & \quad \quad\quad \quad s.t.\quad 0 \le \alpha  \le \bar \alpha, \\
                 & \quad \quad\quad \quad \quad \quad \; \bar{p}_C \le p_C \le P_C, \nonumber
\end{align}
where $\bar \alpha  = 1 - \frac{{\sigma _D^2{p_S}{h_{SC}}}}{(\beta p_C^\lambda+\sigma_C^2) p_C h_{CD}} < 1$, and $\bar{p}_C$ is stated in Lemma \ref{lemma_2}. The first constraint in ${\mathcal{OP}}2a$ is directly derived from (\ref{constraint_of_case_1}), and the second constraint is obtained from $\bar \alpha \ge 0$. Note that ${R_{SC}}\left( {{p_C}} \right)$ is independent of $\alpha$, and then the objective function of ${\mathcal{OP}}2a$ is a non-decreasing function of $\alpha$. Without loss of generality, the optimal power splitting factor at the CU can be chosen as ${\alpha ^*} = \bar \alpha$. Substituting ${\alpha ^*} = \bar \alpha $ into ${R_{\min}}\left( {\alpha ,{p_C}} \right)$, ${\mathcal{OP}}2a$ can be simplified as, 
\begin{align}
{\mathcal{OP}}2b:\quad \mathop {\max }\limits_{{p_C}} {R_{\min }}\left( {\bar \alpha ,{p_C}} \right) & = \min \left( {{R_B}\left( {\bar \alpha ,{p_C}} \right),{R_{SC}}\left( {{p_C}} \right)} \right)\nonumber \\
\quad \quad \;\;\,\;\,s.t. & \quad {{\bar p}_C} \le {p_C} \le {P_C},
\end{align}
where ${R_B}\left( {\bar \alpha ,{p_C}} \right)$ is given in (\ref{R_B_case_1}) on the top of the next page. Obviously, ${R_B}\left( {\bar \alpha ,{p_C}} \right)$ is an increasing function of ${p_C}$ and ${R_{SC}}\left( {{p_C}} \right)$ is a decreasing function of ${p_C}$. Therefore, we can prove that ${\mathcal{OP}}2b$ is quasi-concave with the help of Lemma \ref{lemma_1}.

\begin{figure*}[ht]
\normalsize
\setcounter{equation}{30}
\begin{equation}\label{R_B_case_1}
R_B (\bar{\alpha}, p_C) = \log_2 \left( 1 + \frac{h_{CD} h_{CB} p_C (\beta p_C^\lambda + \sigma_C^2) - \sigma_D^2 h_{SC} h_{CB} p_S}{h_{SB} h_{CD} p_S (\beta p_C^\lambda + \sigma_C^2) + \sigma_D^2 h_{SC} h_{CB} p_S} \right)
\end{equation}
\setcounter{equation}{31}
\hrulefill
\vspace*{4pt}
\end{figure*}

Denote $F_3(x) = {R_B}\left( {\bar \alpha ,{x}} \right) - {R_{SC}}\left( {{x}} \right)$. There must exist a $\breve{p}_C$ which satisfies $F_3(\breve{p}_C) = 0$, and then we have the following discussion on different $\breve{p}_C$.
\begin{itemize}
\item If $\breve{p}_C < \bar{p}_C$, the minimum achievable rate is limited by $R_{SC}(\bar{\alpha}, p_C)$. The CU should use minimum transmit power to keep the RSI at a low level. Therefore, the optimal transmit power of the CU is $p_C^* = \bar{p}_C$.

\item If $\bar{p}_C \le \breve{p}_C \le P_C$, the minimum achievable rate can be maximized as $R_{\min}^{\max} \left( \bar{\alpha}, \breve{p}_C \right) = R_{B} \left( \bar{\alpha}, \breve{p}_C \right) = R_{SC} \left( \breve{p}_C \right)$, the optimal transmit power at the CU is $p_C^* = \breve{p}_C$.

\item If $\breve{p}_C > P_C$, the minimum achievable rate is limited by $R_{B}(\bar{\alpha}, p_C)$. The CU will transmit with the maximum power to improve the achievable rate of the cellular uplink, i.e., $p_C^* = P_C$.
\end{itemize}
After $p_C^*$ is obtained, we have the optimal power splitting factor as
\begin{equation}
\alpha^* = 1 - \frac{{\sigma _D^2{p_S}{h_{SC}}}}{\left[\beta (p_C^*)^\lambda+\sigma_C^2 \right] p_C^* h_{CD}}.
\end{equation}

\subsubsection{Case 2} ${\bar p_C} \ge {P_C}$.

In this case, $\mathcal{OP}2$ can be reformulated as
\begin{align}
{\mathcal{OP}}2c:\quad \mathop {\max }\limits_{\alpha ,{p_C}} {R_{\min }}\left( {\alpha ,{p_C}} \right) & = \min \left( {{R_B}\left( {\alpha ,{p_C}} \right),{R_{CD,S}}\left( {\alpha ,{p_C}} \right)} \right) \nonumber \\
s.t.  \quad & 0 \leq \alpha \leq 1,  \\
                    \quad  \quad\quad 0 & \leq p_C \leq P_C  \nonumber.
\end{align}
The objective function in ${\mathcal{OP}}2c$ is a non-decreasing function of ${p_C}$, therefore the optimal transmit power at the CU can be selected as $p_C^* = {P_C}$. Now ${\mathcal{OP}}2c$ is simplified as 
\begin{align}
{\mathcal{OP}}2d:\;\; \mathop {\max }\limits_{\alpha ,{p_C}} {R_{\min }}\left( {\alpha ,{P_C}} \right) & = \min \left( {{R_B}\left( {\alpha ,{P_C}} \right),{R_{CD,S}}\left( {\alpha ,{P_C}} \right)} \right) \nonumber \\
\quad \quad \;\;\;\,s.t.\quad & 0  \le \alpha  \le 1.
\end{align}

Similar to the discussion of ${\mathcal{OP}}2b$, we have that ${R_B}\left( {\alpha ,{P_C}} \right)$ and ${R_{CD,S}}\left( {\alpha ,{P_C}} \right)$ are monotonically increasing and decreasing function of $\alpha $. By applying Lemma \ref{lemma_1} again, we know that ${\mathcal{OP}}2d$ is also quasi-concave with respect to $\alpha $. Denote $F_4(x) = {R_B}\left( {x ,{P_C}} \right)-{R_{CD,S}}\left( {x ,{P_C}} \right)$, and it is easy to verify that $F_4(0) < 0$ and $F_4(1) > 0$. There must exist $\breve{\alpha}$ which satisfies $F_4(\breve{\alpha})=0$, such that the minimum achievable rate can be maximized as $R_{\min}^{\max} \left( \breve{\alpha}, P_C \right) = R_{B} \left( \breve{\alpha}, P_C \right) = R_{CD,S} \left( \breve{\alpha}, P_C \right)$. Therefore the optimal power splitting factor in this case is $\alpha^* = \breve{\alpha}$.

\subsection{Summary}
\begin{figure}[!t]
\centering
\includegraphics[width=4cm]{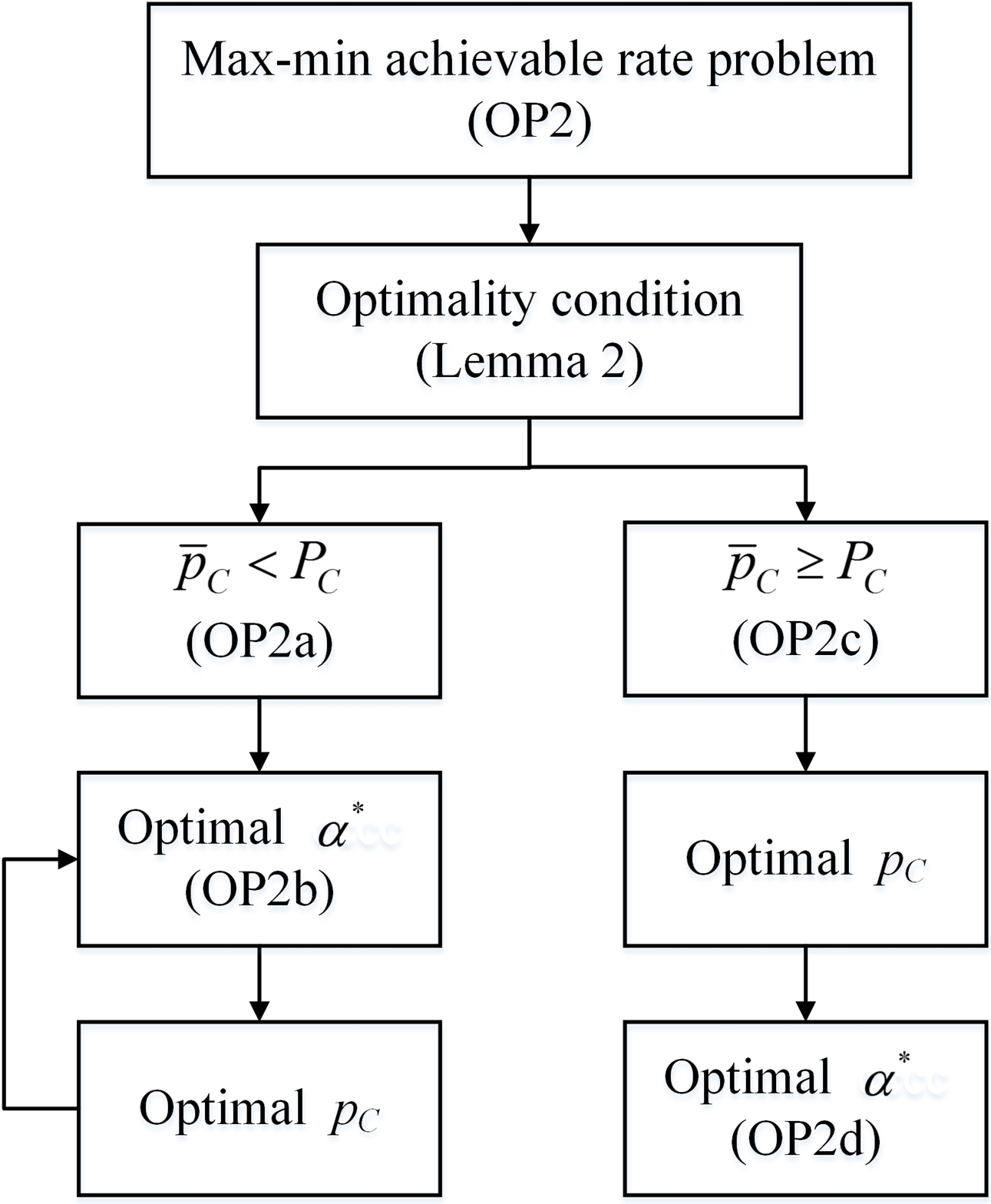}
\caption{Procedure to solve the max-min achievable rate problem.}
\label{max_min_ar}
\end{figure}

As a summary of Case 1 and Case 2, we organize the procedure to solve the max-min achievable rate problem in Fig. \ref{max_min_ar}, and propose an algorithm (denoted as Algorithm 1) to compute the jointly optimal $(\alpha, p_C)$. The idea of Algorithm 1 is sequentially identifying and maximize the bottleneck link of the considered network to meet the max-min criteria. The pseudocode of Algorithm 1 is given on the top of the next page.

\begin{algorithm}[!t]
\caption{Algorithm to compute the optimal power allocation of the max-min achievable rate problem}
\begin{algorithmic}[1]
\Require Global CSI ($h_{SB}, h_{CB}, h_{SC}, h_{CD}$); average noise power ($\sigma_B^2, \sigma_C^2, \sigma_D^2$); SIS parameters ($\beta, \lambda$); power constraint at the CU ($P_C$);
\Ensure Optimal power allocation $(\alpha, p_C)$;
\State Solve $F_2(\bar{p}_C) = 0$ to get $\bar{p}_C$;
\If{$\bar{p}_C < P_C$}
	\State Solve $F_3(x) = 0$ to get $\breve{p}_C$;
	\If{$\breve{p}_C < \bar{p}_C$}
		\State $p_C^* = \bar{p}_C$;
	\ElsIf{$\breve{p}_C > P_C$}
		\State $p_C^* = P_C$;
	\Else
		\State $p_C^* = \breve{p}_C$;
	\EndIf
	\State $\alpha^* = 1 - \frac{{\sigma _D^2{p_S}{h_{SC}}}}{\left[\beta (p_C^*)^\lambda+\sigma_C^2 \right] p_C^* h_{CD}}$;
\Else
	\State $p_C^* = P_C$;
	\State Solve $F_4(\alpha^*) = 0$ to get $\alpha^*$;
\EndIf
\end{algorithmic}
\end{algorithm}

\section{Optimal Power Allocation for Minimizing the Joint Outage Probability}\label{section_v}
When only statistical CSI is available, the CU cannot adjust transmit power or power splitting factor to maximize the achievable rate. Instead, in this paper, we consider that the CU optimizes the power allocation scheme $\left( {\alpha ,{p_C}} \right)$ to improve the outage performance. The joint outage probability minimization problem can be formulated as
\begin{align}
{\mathcal{OP}}3:\quad \mathop {\min }\limits_{\alpha ,{p_C}} & \;\quad {\rm{P}_{out}}\left( {\alpha ,{p_C}} \right)\nonumber \\
\quad \quad \;\;\,s.t. & \;\quad \frac{{{\xi _B}}}{{1 + {\xi _B}}} < \alpha  \le 1,\\
\;\quad \quad \;\;\,\quad & \quad \;0 \le {p_C} \le {P_C}, \nonumber
\end{align}
where ${\rm{P}_{out}}\left( {\alpha ,{p_C}} \right)$ is given in Theorem \ref{theorem_2}. Note that ${\rm{P}_{out}}\left( {\alpha ,{p_C}} \right)$ is a combination of exponential functions and rational fractions, which is hardly tractable. As an alternative, we will derive the upper bound of ${\rm{P}_{out}}\left( {\alpha ,{p_C}} \right)$ and loosen $\mathcal{OP}3$ to obtain a suboptimal power allocation scheme $\left( {\alpha^\circledast ,{p_C^\circledast}} \right)$. 

\subsection{Upper Bound of the Joint Outage Probability}
We use the worst-case interference approximation to obtain the upper bound of ${\rm{P}_{out}}\left( {\alpha ,{p_C}} \right)$. Considering that the interference caused by D2D transmission at the BS cannot exceed the threshold $\theta$, the SINR at the BS has a lower bound of ${\gamma _{CB}} \ge \frac{{\alpha {p_C}{h_{CB}}}}{{\theta  + \left( {1 - \alpha } \right){p_C}{h_{CB}} + \sigma_B^2}}$. Following the similar approach in Appendix \ref{proof_of_theorem_2}, the upper bound of ${\rm{P}_{out}}\left( {\alpha ,{p_C}} \right)$ for $ \frac{\xi_B}{1+\xi_B}<\alpha \le 1$ can be calculated as
\begin{equation}\label{upper_bound_of_jop}
{\rm{P}_{out}} \le 1 - \left( {{\rm{P}_1} + {{\tilde{\rm{P}}}_2}} \right){\rm{P}_3} = {\tilde{\rm{P}}_{out}},
\end{equation}
where $\tilde{\rm{P}}_2$ is given in (\ref{upper_bound_of_P_2}). $\rm{P}_1$ and $\rm{P}_3$ are given in (\ref{P1}) and (\ref{P3}), respectively. 

\begin{figure*}[ht]
\normalsize
\setcounter{equation}{36}
\begin{equation}\label{upper_bound_of_P_2}
{\tilde{\rm{P}}_2} = \exp  \left[ { - \frac{{{\xi _D}\beta (p_C^\lambda + \sigma_C^2) }}{{{\varphi _{SC}}{P_S}}} - \frac{{{\xi _B}(\theta+\sigma_B^2) }}{{{\varphi _{CB}}{p_C}\left( {\alpha  - {\xi _B} + \alpha {\xi _B}} \right)}}} \right] \times \left[ {1 - \exp \left( { - \frac{\theta }{{{\varphi _{SB}}{P_S}}}} \right)} \right],
\end{equation}
\setcounter{equation}{37}
\hrulefill
\vspace*{4pt}
\end{figure*}

The upper bound in (\ref{upper_bound_of_jop}) provides a more tractable expression. In addition, as will be shown in Section \ref{section_vi}, the derived upper bound offers a good approximation when the interference threshold $\theta$ is much smaller than the maximum transmit power at the DT, i.e., $\frac{\theta}{P_S} \rightarrow 0$. Therefore, ${\mathcal{OP}}3$ can be relaxed as
\begin{align}
{\mathcal{OP}}3a:\quad \mathop {\min }\limits_{\alpha ,{p_C}} & \;\quad {{\tilde{\rm{P}}}_{out}}\left( {\alpha ,{p_C}} \right) \nonumber \\
\quad \quad \;\;\,s.t. & \;\quad \frac{{{\xi _B}}}{{1 + {\xi _B}}} < \alpha  \le 1, \\
\;\quad \quad \;\;\,\quad & \quad \;0 \le {p_C} \le {P_C}. \nonumber
\end{align}
In general, the objective function of ${\mathcal{OP}}3a$ is not jointly concave of $(\alpha, p_C)$. However, as shown in the following, ${\mathcal{OP}}3a$ is quasi-concave.

\subsection{Optimization of the Power Splitting Factor}\label{section_v_b}
We first analyze the optimal power splitting factor ${\alpha ^ \circledast}$ for a fixed $p_C$. In order to predigest the analysis, we introduce the following variables and functions for notation convenience,
\begin{align*}
{\rm{P}_1} = A \times f\left( \alpha  \right),\quad {\tilde{\rm{P}}_2} = B \times f\left( \alpha  \right),\quad {\rm{P}_3} = g\left( \alpha  \right)
\end{align*}
where
\begin{align*}
& A = \frac{{{\varphi _{SC}}\theta \exp \left[ { - \frac{\theta }{{{P_S}}}\left( {\frac{{{\xi _D}(\beta p_C^\lambda + \sigma_C^2)}}{{{\varphi _{SC}}\theta }} + \frac{1}{{{\varphi _{SB}}}}} \right)} \right]}}{{{\varphi _{SB}}{\xi _D}(\beta p_C^\lambda +\sigma_C^2) + {\varphi _{SC}}\theta }},\\
& B = \exp \left( { - \frac{{{\xi _D}(\beta p_C^\lambda + \sigma_C^2)}}{{{\varphi _{SC}}{P_S}}}} \right) \times \left[ {1 - \exp \left( { - \frac{\theta }{{{\varphi _{SB}}{P_S}}}} \right)} \right],\\
& f\left( \alpha  \right) = \exp \left[ { - \frac{{{\xi _B}(\theta +\sigma_B^2)}}{{{\varphi _{CB}}{p_C}\left( {\alpha  - {\xi _B} + \alpha {\xi _B}} \right)}}} \right],
\end{align*}
and $g(\alpha)$ is given in (\ref{P3}).

By (\ref{upper_bound_of_jop}), the upper bound of joint outage probability can be rewritten as
\begin{equation}
\tilde{\rm{P}}_{out}\left( \alpha  \right) = 1 - \left( {A + B} \right)f\left( \alpha  \right)g\left( \alpha  \right).
\end{equation}
Taking the derivative of $\tilde{\rm{P}}_{out}\left( \alpha  \right)$, we have
\begin{align}\label{partial_derivative_wrt_alpha}
\frac{\partial }{{\partial \alpha }}{\tilde P_{out}} & \left( \alpha  \right) \nonumber \\
& =  - \left( {A + B} \right)\left[ {g\left( \alpha  \right)\frac{\partial }{{\partial \alpha }}f\left( \alpha  \right) + f\left( \alpha  \right)\frac{\partial }{{\partial \alpha }}g\left( \alpha  \right)} \right].
\end{align}
Furthermore, we have
\begin{align}
\frac{\partial }{{\partial \alpha }}f\left( \alpha  \right) & = \underbrace {\frac{{{\varphi _{CB}}{p_C}\left( {1 + {\xi _B}} \right){\xi _B}(\theta + \sigma_B^2) }}{{{{\left[ {{\varphi _{CB}}{p_C}\left( {1 + {\xi _B}} \right)\alpha  - {\varphi _{CB}}{p_C}{\xi _B}} \right]}^2}}}}_{ \buildrel \Delta \over = h\left( \alpha  \right)}f\left( \alpha  \right) \nonumber \\
& = h\left( \alpha  \right)f\left( \alpha  \right),
\end{align}
and
\begin{equation}
\frac{\partial }{{\partial \alpha }}g\left( \alpha  \right) = l\left( \alpha  \right)g\left( \alpha  \right),
\end{equation}
where
\begin{equation}
l(\alpha) = 
	\begin{cases}
		\frac{\xi_B (1+\xi_B) \sigma_D^2}{\varphi_{CD} p_C (\alpha - \xi_B + \alpha \xi_B)^2}, & \frac{\xi_B}{1+\xi_B} < \alpha \leq \frac{\xi_B \xi_D + \xi_B}{\xi_B \xi_D + \xi_B + \xi_D}, \\
         		- \frac{{{\xi _D}\sigma _D^2}}{{{\varphi _{CD}}{p_C}{{\left( {1 - \alpha } \right)}^2}}}, &  \frac{\xi_B \xi_D + \xi_B}{\xi_B \xi_D + \xi_B + \xi_D} < \alpha \leq 1.
	\end{cases}
\end{equation}

Then (\ref{partial_derivative_wrt_alpha}) can be rewritten as
\begin{equation}
\frac{\partial }{{\partial \alpha }}{\tilde{\rm{P}}_{out}}\left( \alpha  \right) =  - \left( {A + B} \right)f\left( \alpha  \right)g\left( \alpha  \right)\left[ {h\left( \alpha  \right) + l\left( \alpha  \right)} \right].
\end{equation}

For $\frac{\xi_B}{1+\xi_B} < \alpha \leq \frac{\xi_B \xi_D + \xi_B}{\xi_B \xi_D + \xi_B + \xi_D}$, we have $\frac{\partial }{{\partial \alpha }}{\tilde{\rm{P}}_{out}}\left( \alpha  \right) < 0$, i.e., $\tilde{\rm{P}}_{out}(\alpha, p_C)$ is a monotonically decreasing function in $\big( \frac{\xi_B}{1+\xi_B}, \frac{\xi_B \xi_D + \xi_B}{\xi_B \xi_D + \xi_B + \xi_D} \big]$. In this case, the optimal $\alpha$ is
\begin{equation}\label{opt_alpha_1}
\alpha^\circledast = \frac{\xi_B \xi_D + \xi_B}{\xi_B \xi_D + \xi_B + \xi_D}.
\end{equation}

For $\frac{\xi_B \xi_D + \xi_B}{\xi_B \xi_D + \xi_B + \xi_D} < \alpha \leq 1$, let $\frac{\partial }{{\partial \alpha }}{\tilde{\rm{P}}_{out}}\left( \alpha  \right) = 0$, which is equivalent to $h\left( \alpha  \right) + l\left( \alpha  \right) = 0$. It is easy to verify that equation $h\left( \alpha  \right) + l\left( \alpha  \right) = 0$ has a sole positive root which is the optimal $\alpha$,
\begin{equation}\label{opt_alpha_2}
\alpha ^ \circledast = \frac{{K + M}}{{1 + M}}
\end{equation}
where $K = \frac{{{\xi _B}}}{{1 + {\xi _B}}}$, $M = \sqrt {\frac{{{\varphi _{CD}}(\theta+\sigma_B^2) K}}{{{\varphi _{CB}}{\xi _D}\sigma _D^2}}} $. Since $0 < K < 1$, we have $K < \frac{{K + M}}{{1 + M}} < 1$, i.e., $\alpha = \frac{{K + M}}{{1 + M}}$ is a feasible solution to $\mathcal{OP}3a$. 

\subsection{Optimization of Transmit Power at the CU}\label{section_v_c}
Substituting $\alpha^\circledast$ into (\ref{upper_bound_of_jop}), ${\tilde{\rm{P}}_{out}} (\alpha^\circledast, p_C)$ can be treated as a function only depending on $p_C$,
\begin{equation}
{\tilde{\rm{P}}_{out}(\alpha^\circledast, p_C)} = 1 - j\left( {{p_C}} \right)k\left( {{p_C}} \right)
\end{equation}
where
\begin{align*}
& j\left( {{p_C}} \right) = \frac{E}{{Cp_C^\lambda  + D}} + F,\; k\left( {{p_C}} \right) = \exp \left( { - \frac{{Gp_C^{1 + \lambda } + H}}{{{p_C}}}} + I \right), \\
 & C = {\varphi _{SB}}{\xi _D}\beta ,\;D = \varphi_{SB}\xi_D\sigma_C^2 + {\varphi _{SC}}\theta,\;\\
 & E = {\varphi _{SC}}\theta \exp \left( {  \frac{{-\theta }}{{{\varphi _{SB}}{P_S}}}} \right), F = 1 - \exp \left( {  \frac{-\theta }{{{\varphi _{SB}}{P_S}}}} \right), G = \frac{{{\xi _D}\beta }}{{{\varphi _{SC}}{P_S}}},\\
 & H = \frac{\xi_B (\theta+\sigma_B^2)}{\varphi_{CB} (\alpha^\circledast - \xi_B + \alpha^\circledast \xi_B)}+ \frac{\xi_D \sigma_D^2}{\varphi_{CD} (1-\alpha^\circledast)}, I = \frac{\xi_D\sigma_C^2}{\varphi_{SC} P_S}
\end{align*}
Let ${{\partial {{\tilde P}_{out}}} \mathord{\left/{\vphantom {{\partial {{\tilde P}_{out}}} {\partial \alpha }}} \right.
 \kern-\nulldelimiterspace} {\partial \alpha }} = 0$, we have the following equation,
\begin{align}\label{eqn_p_c}
 &\lambda {C^2} FGp_C^{1 + 3\lambda } + \lambda C\left( {2DF + E} \right)Gp_C^{1 + 2\lambda } \nonumber \\
& \;\; + \lambda \left( {{D^2}FG + DEG + CE} \right)p_C^{1 + \lambda }- {C^2}FHp_C^{2\lambda } \nonumber \\
&  \;\; - C\left( {2DF + E} \right)Hp_C^\lambda  - D\left( {DF + E} \right)H = 0.
\end{align}

Denote the generalized polynomial \cite{bergelson2007distribution} on the left hand side of (\ref{eqn_p_c}) by $Q\left( {{p_C}} \right)$. It is easy to see that $Q\left( {{p_C}} \right)$ has only one sign change between the third and fourth terms. According to the \textit{Descartes' Rule of Signs} \cite{jameson2006counting}, the equation $Q\left( {{p_C}} \right)=0$ has at most one positive root. Let $p_C^\circ$ be the positive root of $Q\left( {{p_C}} \right)=0$, then we have the following discussions.
\begin{itemize}
\item Assume $p_C^ \circ  \in \emptyset$. Since $Q\left( 0 \right) =  - D\left( {DF + E} \right)H < 0$, we know ${\left. {\frac{\partial }{{\partial {p_C}}}{{\tilde P}_{out}}} \right|_{{p_C} = 0}} < 0$ for ${p_C} \ge 0$. Therefore, ${\tilde{\rm{P}}_{out}}$ is a monotonically decreasing function of ${p_C}$. The optimal transmit power at the CU is
\begin{equation}
p_C^\circledast = P_C.
\end{equation}

\item If $p_C^ \circ  \notin \emptyset$ and $p_C^ \circ  \notin \left[ {0,{P_C}} \right]$, ${\tilde{\rm{P}}_{out}}$ is a decreasing function of $p_C$ in the feasible region of $\mathcal{OP}3a$. In this case, the optimal solution is $p_C^\circledast = P_C$.

\item If $p_C^ \circ  \notin \emptyset$ and $p_C^ \circ  \in \left[ {0,{P_C}} \right]$, then $\tilde{\rm{P}}_{out}$ is a decreasing function of $p_C$ for $0 \le {p_C} \le p_C^ \circ $ and an increasing function for $p_C^ \circ  \le {p_C} \le {P_C}$. Hence, the optimal transmit power at the CU is 
\begin{equation}
p_C^\circledast = p_C^\circ.
\end{equation}
\end{itemize}

\subsection{Summary}
Following the analyses in Section \ref{section_v_b} and \ref{section_v_c}, the suboptimal solution to $\mathcal{OP}3$ can be summarized as
\begin{equation}
(\alpha^\circledast, p_C^\circledast)=\left( \max\left( \frac{\xi_B \xi_D + \xi_B}{\xi_B \xi_D + \xi_B + \xi_D},\frac{{K + M}}{{1 + M}} \right), \min\left( p_C^\circ, P_C \right) \right).
\end{equation}
The procedure to solve the minimum joint outage probability problem is illustrated in Fig. \ref{fig_mjop}. Apparently, the optimal power splitting factor is independent of the transmit power at the CU, therefore $\alpha ^ \circledast$ and ${p_C^\circledast}$ can be individually obtained.
\begin{figure}
\centering
\includegraphics[width=4cm]{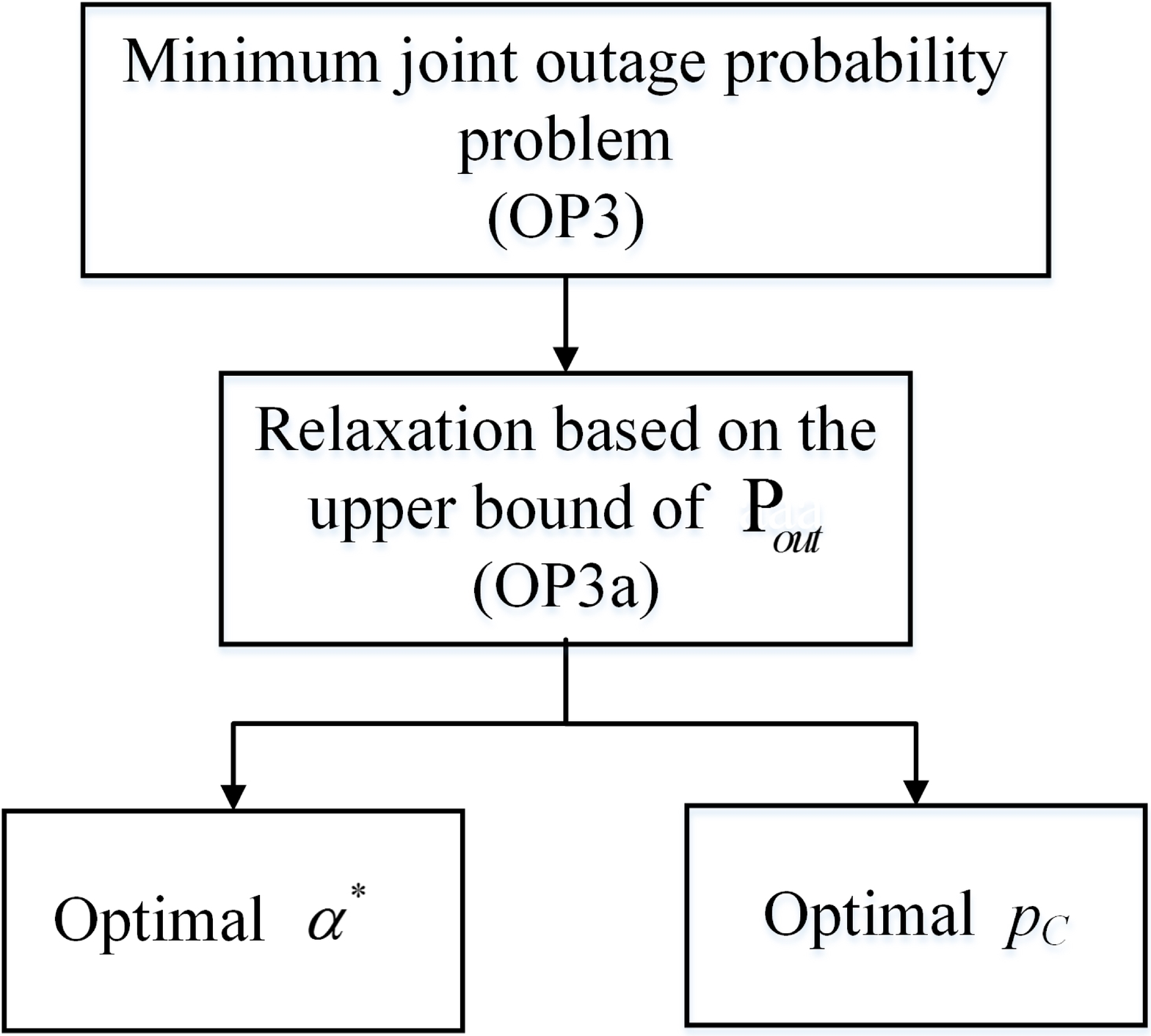}
\caption{Procedure to solve the minimum joint outage probability problem.}
\label{fig_mjop}
\end{figure}

\section{Numerical Results}\label{section_vi}

In this section, we use numerical simulations to verify the performance analysis and evaluate the proposed power allocation algorithms. The channel coefficient is independently realized in each simulation according to the Gaussian distribution. The main simulation parameters are listed in Table I, other involved parameters will be stated in each simulation. In addition, all non-linear equations are solved with the bisection method \cite{burden1985numerical}.

\begin{table}\label{table_sim_para}
\caption{Main Simulation Parameters}
\begin{center}
\begin{tabular}[!t]{l | c}
\hline
\textbf{Parameters}               & \textbf{Value} \\
\hline
Carrier center frequence & 2GHz \\
Channel bandwidth        & 180kHz \\
Peak transmit power of the users & 23dBm \\
Receiver noise density   & -174dBm \\
Cell radius              & 200m \\
Distance between the DT and the DR & 150-300m \\
Minimum distance between the users and the BS & 30m \\
Decay factor of the path-loss   & 3.8\\
\hline
\end{tabular}
\end{center}

\end{table}

\begin{figure}[!t]
\centering
\includegraphics[width=7cm]{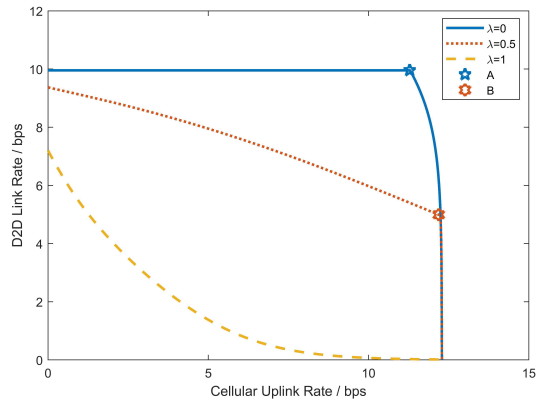}
\caption{Achievable rate region ($h_{SB}=h_{SC}=h_{CB}=h_{CD}=0.5$, $\beta = 1$, $P_S=23\rm{dBm}$, $P_C = 23\rm{dBm}$).}
\label{fig_rate_region}
\end{figure}

Fig. \ref{fig_rate_region} shows the achievable rate region of the proposed cooperative D2D network with different $\lambda$, which is the SIS parameter defined in (\ref{rsi_model}). We can see that with the decrease of $\lambda$, the proposed network has a larger achievable rate region. The maximum achievable rate of the cellular uplink for different $\lambda$ is the same, since the uplink rate is not affected by RSI. However, the maximum achievable rate of the cooperative D2D link decrease as $\lambda$ grows due to the strengthened RSI. At point \textsf{A} for $\lambda=0$ or \textsf{B} for $\lambda=0.5$, the first and second hop of the cooperative D2D link achieves the identical rate. Further increase of $R_B$ requires larger $p_C$ or $\alpha$, which will cause rapid descent of $R_D$.

Fig. \ref{fig_2} shows the joint outage probability of the considered system with a fixed power splitting factor   at the CU. On one hand, we observe that the curves of theoretical analysis perfectly match the curves of Monte Carlo simulation results, which confirms our analytical results in Section \ref{section_iii}. On the other hand, the curves of the upper bound of the joint outage probability almost overlap the curves of the exact joint outage probability, which indicates that the upper bound shown in (\ref{upper_bound_of_jop}) can be regarded as an accurate approximation of the exact joint outage probability.

\begin{figure}[!t]
\centering
\includegraphics[width=7cm]{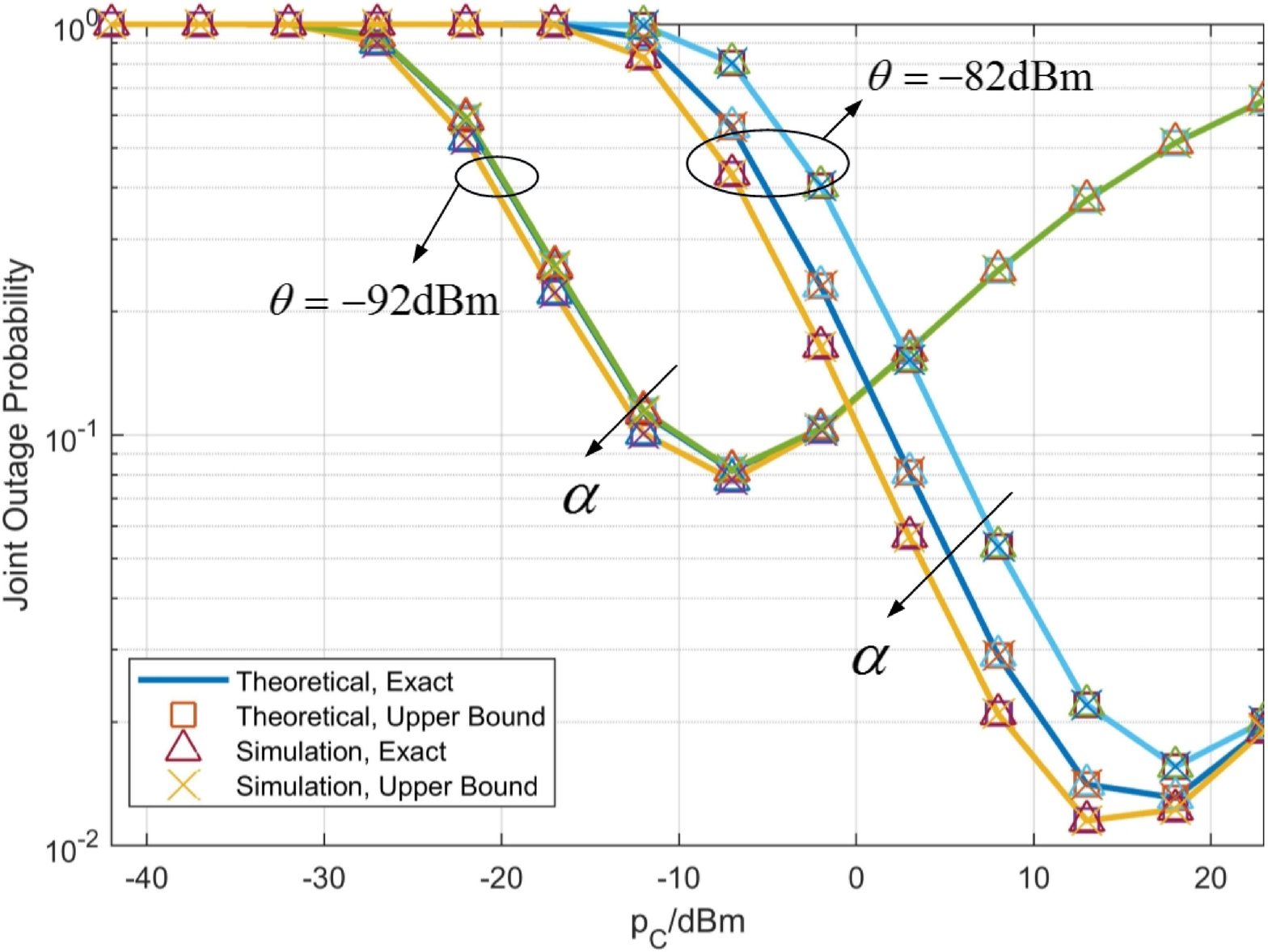}
\caption{Joint outage probability versus $p_C$ ($\alpha = [0.6,0.7,0.8]$, $\lambda=0.1$, $\beta = 1$, $P_S=23\rm{dBm}$, $\eta_B=\eta_D=1$).}
\label{fig_2}
\end{figure}

Roughly speaking, the curve of the joint outage probability is a ``V" shape. With the increase of $p_C$, the joint outage probability first decreases due to the improvement of SINR/SNR. However, a further increase of $p_C$ causes more sever RSI at the CU, and then leads to the growth of joint outage probability, since the joint outage probability is dominated by the link between the DT and the CU in the high transmit power region. Furthermore, the curves of joint outage probability with different $\alpha$ converges when $p_C$ is high enough. This can be explained by asymptotic analysis of $\rm{P}_{out}$ in Section \ref{section_iii}. For fixed and relatively smaller $p_C$, the joint outage probability decreases with $\alpha$, since the joint outage probability is limited by the cellular uplink transmission from the CU to the BS. Besides, a larger $\theta$ will loosen the transmit power constraint at the D2D transmitter and leads to a lower minimum joint outage probability at the cost of higher transmit power of the CU. 

\begin{figure}[!t]
\centering
\includegraphics[width=7cm]{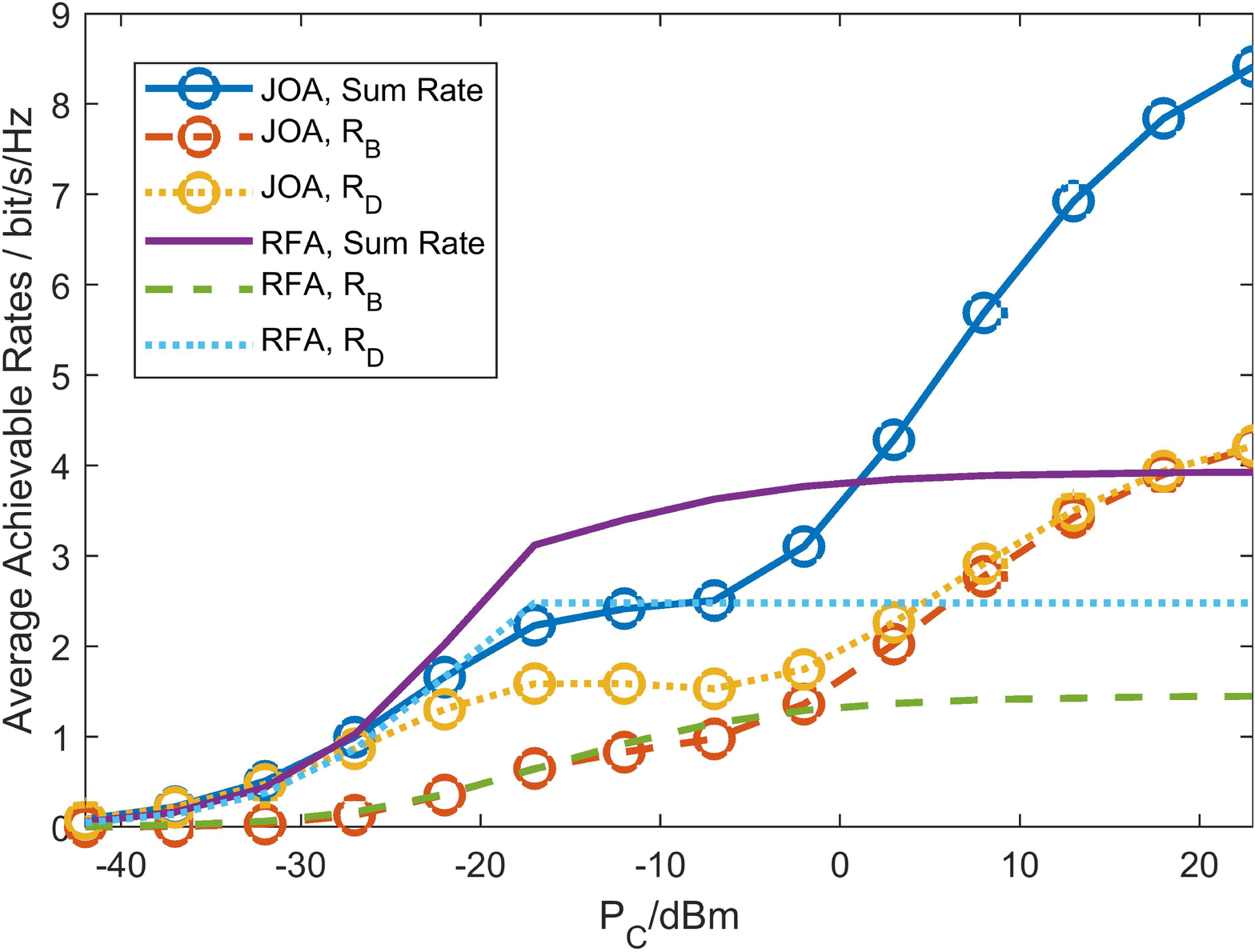}
\caption{Comparison of average achievable rate with difference power allocation algorithms ($\theta=-92\rm{dBm}$, $\lambda=0.1$, $\beta = 1$, $P_S=23\rm{dBm}$, $\eta_B=\eta_D=1$).}
\label{fig_3}
\end{figure}
Fig. \ref{fig_3} illustrates the average achievable rate with the proposed joint optimization algorithm (JOA). For comparison, we adopt the random-$\alpha$  fixed-$p_C$  algorithm (RFA) as a benchmark, where the CU uniformly selects $\alpha$ from $[0,1]$ and transmits with the maximum power. From a sum rate viewpoint, the RFA outperforms the JOA when $P_C<0\rm{dBm}$. When $P_C>0\rm{dBm}$, the RFA introduces severe self-interference due to maximum transmit power at the CU. Meanwhile, the randomly chosen $\alpha$ restricts the achievable rate at the BS. Therefore, the sum rates of RFA reaches a plateau rapidly and results in the waste of transmit power at the CU. With the JOA, the CU can dynamically maximize the achievable rate of the bottleneck link according to $p_C$. From the decoding order at the BS and the DUE receiver, we know that the considered system is limited by the cellular uplink channel capacity. As shown in Fig. \ref{fig_3}, $R_B$ with JOA grows monotonically as $P_C$ increases. For $P_C>10\rm{dBm}$, $R_B$ and $R_D$ converge to a same value, which confirms the validity of the proposed JOA. In addition, the JOA achieves a much higher sum rate than the RFA does, which demonstrates that the transmit power at the CU can be more effectively utilized with JOA.

\begin{figure}[!t]
\centering
\includegraphics[width=7cm]{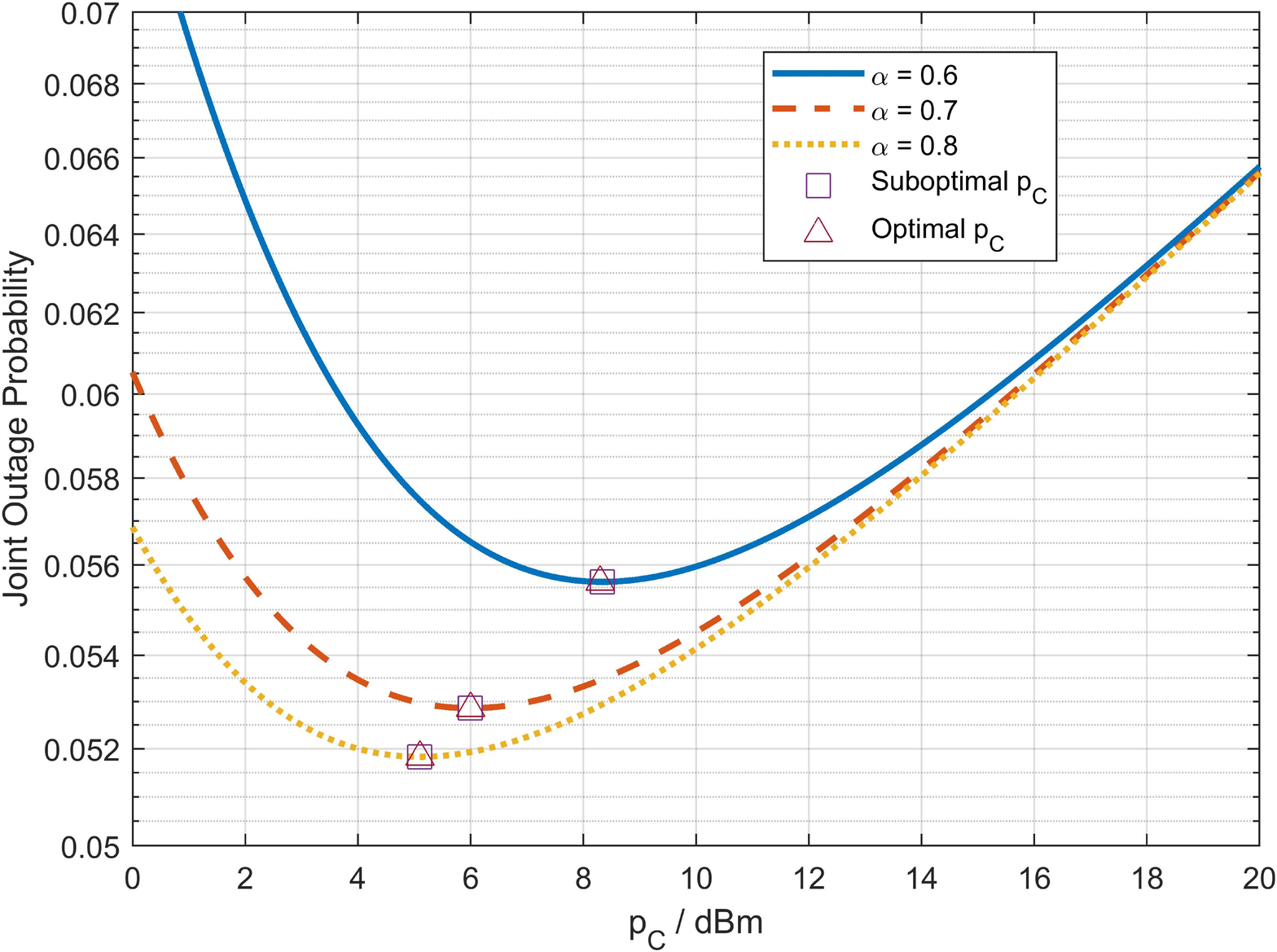}
\caption{Joint outage probability with different $\alpha$ ($\theta=-92\rm{dBm}$, $\lambda=0.1$, $\beta = 1$, $P_S=23\rm{dBm}$, $\eta_B=\eta_D=1$).}
\label{fig_4}
\end{figure}

\begin{figure}[!t]
\centering
\includegraphics[width=7cm]{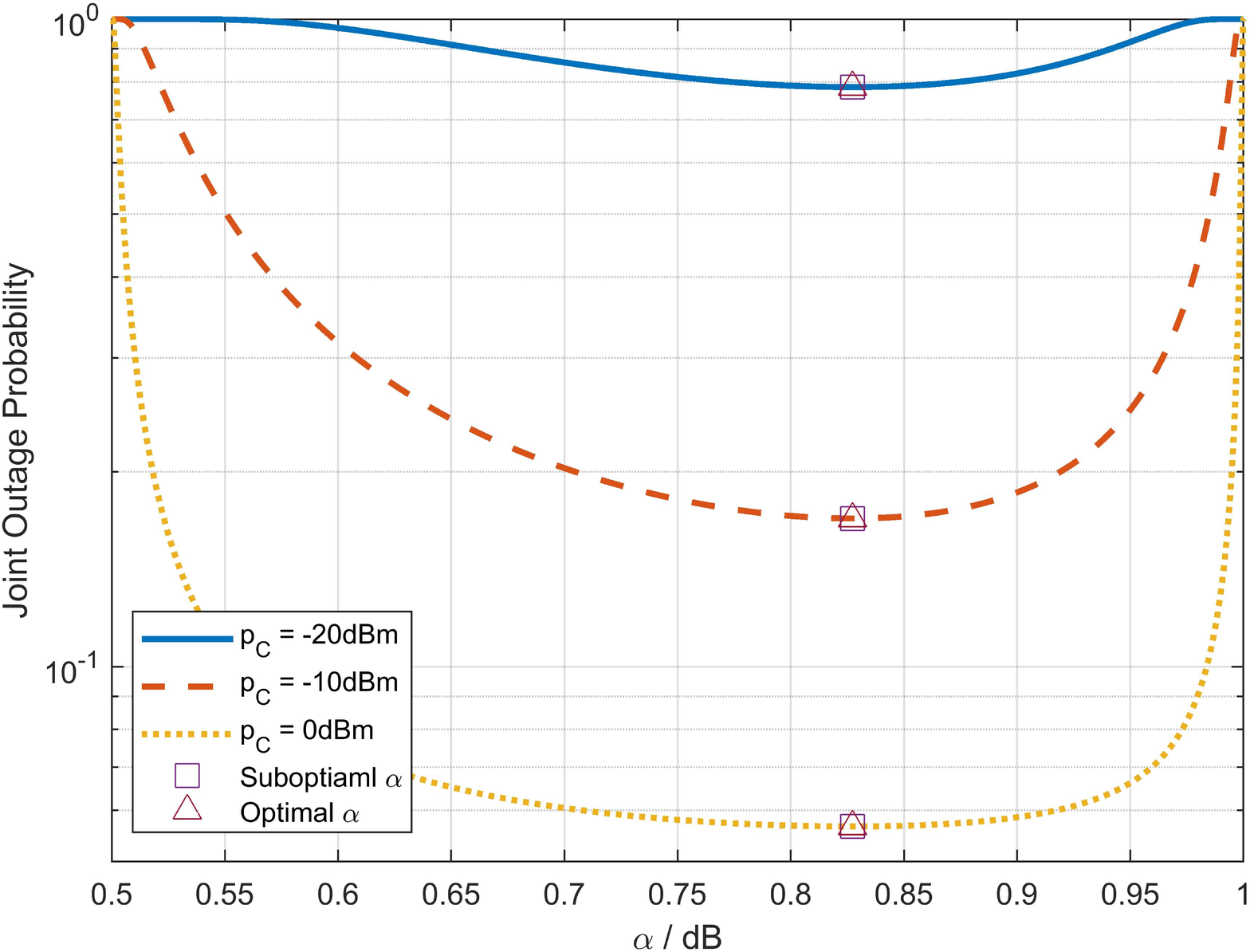}
\caption{Joint outage probability with different $p_C$ ($\theta=-92\rm{dBm}$, $\lambda=0.1$, $\beta = 1$, $P_S=23\rm{dBm}$, $\eta_B=\eta_D=1$).}
\label{fig_5}
\end{figure}

Fig. \ref{fig_4} and Fig. \ref{fig_5} provide the simulation results of the suboptimal power allocation in the sense of joint outage probability minimization. The optimal solutions are carried out by exhaustive search. It can be observed that for fixed $\alpha$ or $p_C$, the suboptimal solutions is very close to the optimal solutions. As analyzed in Section \ref{section_v}, $\alpha$ and $p_C$ can be optimized separately, and therefore the effectiveness of the proposed suboptimal power allocation is verified.

\begin{figure}[!t]
\centering
\includegraphics[width=7cm]{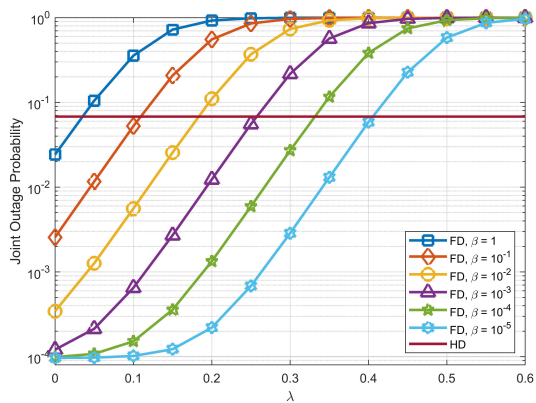}
\caption{Joint outage probability versus $\lambda$ ($\theta=-92\rm{dBm}$, $\alpha=0.95$, $p_C = 23\rm{dBm}$ $P_S=23\rm{dBm}$, $\eta_B=\eta_D=1$).}
\label{fig_6}
\end{figure}

\begin{figure}[!t]
\centering
\includegraphics[width=7cm]{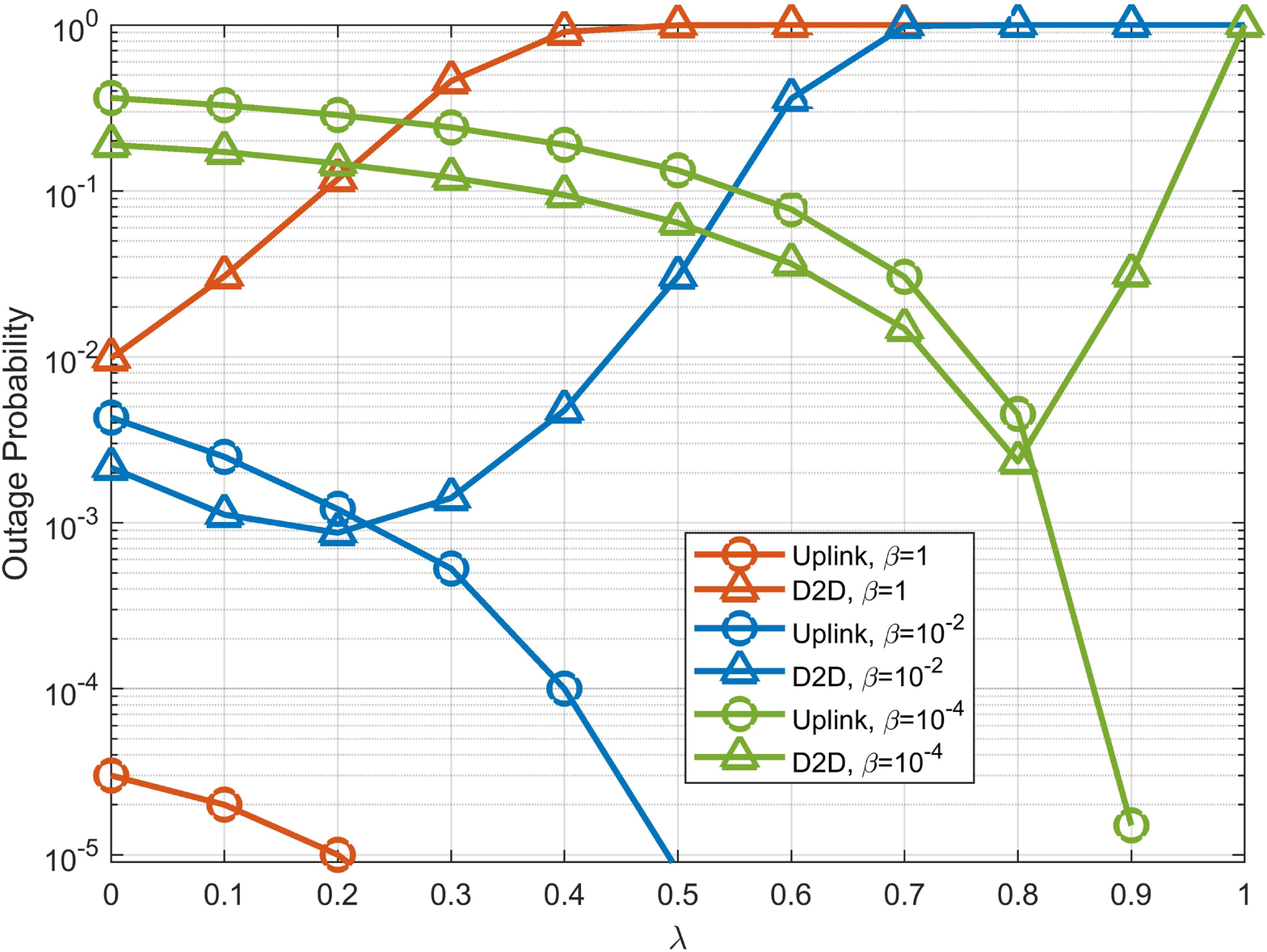}
\caption{Joint outage probability versus $\lambda$ ($\theta=-92\rm{dBm}$, $\alpha=0.95$, $p_C = 23\rm{dBm}$ $P_S=23\rm{dBm}$, $\eta_B=\eta_D=1$).}
\label{fig_7}
\end{figure}

In order to investigate the relation between the network performance and the performance of SIS, we also simulate the joint outage probability as a function of $\lambda$, with different $\beta$. The HD system is adopted as a benchmark. We can see that if the TRR is high enough, the FD network outperforms the HD counterpart. By (\ref{trr}), the TRR at each cross point where the FD and HD networks achieve the same outage performance is 130dB. In other words, the advantage of the FD mode over the HD mode lies on the TRR rather than the unilateral value of $\lambda$ and $\beta$. However, different $\lambda$ and $\beta$ provide distrinct tradeoff between the outage performance of the cellular uplink and the cooperative D2D link. In Fig. \ref{fig_7}, we present the outage probability for the cellular uplink and the cooperative D2D link with TRR fixed to 130dB. We can see that the outage probability of the cellular uplink decreases with $\lambda$, and the outage probability of the cooperative D2D link increases as $\lambda \rightarrow 1$. For $\beta=10^{-2}$ and $\beta = 10^{-4}$, two reverse points at $\lambda = 0.2$ and $\lambda=0.8$ are observed. This phenomenon can be explained by investigating the relation among the TRR, $p_C$, $\lambda$ and $\beta$. For fixed $\beta$ and TRR, we have $p_C = \sqrt[1-\lambda]{TRR\times\beta}$, which indicates that $p_C$ is an increasing function of $\lambda$. Note that $R_B(\alpha, p_C)$ is also an increasing function of $\lambda$, and therefore the outage probability of the cellular uplink monotonically decreases with $\lambda$. However, with the increase of $p_C$, the outage probability of the first-hop D2D link from the DT to the CU is worsen due to strengthened RSI, meanwhile the outage probability of the second-hop D2D link from the CU to the DR is improved due to elevated SNR. In addition, the outage probabilities of the cellular uplink and the cooperative D2D link are close when $\lambda \rightarrow 0$, but diverge when $\lambda \rightarrow 1$, which implies that smaller $\lambda$ provides better fairness between the cellular uplink and the cooperative D2D link.

\section{Conclusion}\label{section_vii}
In this paper, we proposed a cooperative underlay D2D network, where the cellular user is assigned as an FD relay with superposition coding and the D2D receiver performs successive interference cancellation to decode the desired signal. Both achievable rate region and joint outage probability were analyzed. To optimize the network performance, two power allocation schemes were proposed in the sense of max-min achievable rate and minimizing the upper bound of the joint outage probability. The correctness of theoretical analysis and the validity of power allocation schemes have been verified by numerical simulations, which reveals the superiority of the proposed FD cooperative D2D network.

\appendices
\section{Proof of Theorem \ref{theorem_1}}\label{proof_of_theorem_1}
By the first constraint in (\ref{OP1}), we can express $\alpha$ as a function of $p_C$,
\begin{equation}\label{revised_alpha}
\alpha = (1-2^{-\tilde{R}_B}) \left( 1 + \frac{p_S h_{SB} + \sigma_B^2}{p_C h_{CB}} \right).
\end{equation}
Then, $R_{CD,S}(\alpha, p_C)$ can be rewritten as (\ref{revised_R_CD_S})

\begin{equation}\label{revised_R_CD_S}
R_{CD,S}(p_C) = \log_2 \left( 1 - \frac{h_{CD} (p_S h_{SB} + \sigma_B^2) (1-2^{-\tilde{R}_B})}{h_{CB} \sigma_D^2} + \frac{h_{CD} 2^{-\tilde{R}_B}}{\sigma_D^2} p_C \right).
\end{equation}

Since $0 \leq \alpha \leq 1$, we have
\begin{equation}
p_C \geq \frac{(p_S h_{SB} + \sigma_B^2)(2^{\tilde{R}_B} - 1)}{ h_{CB}}.
\end{equation}
In addition, $\frac{(p_S h_{SB} + \sigma_B^2)(2^{\tilde{R}_B} - 1)}{ h_{CB}} \leq P_C$ is guaranteed by $\tilde{R}_B \leq R_B^{\max}$. Therefore we can reformulate $\mathcal{OP}1$ as follows,
\begin{align}\label{revised_OP1}
 \max_{\alpha, p_C} & \quad \min \left( R_{SC}(p_C), R_{CD,S}(p_C) \right) \nonumber \\
               s.t. & \frac{(p_S h_{SB} + \sigma_B^2)(2^{\tilde{R}_B} - 1)}{ h_{CB}} \leq p_C \leq P_C.
\end{align}
Obviously, $R_{CD,S}(p_C)$ is a monotonically increasing function of $p_C$, and $R_{SC}(p_C)$ is a monotonically decreasing function of $p_C$. By Lemma \ref{lemma_1}, the optimization problem in (\ref{revised_OP1}) is quasi-concave. Denote $F_1(x) = R_{CD,S}(x) - R_{SC}(x)$, then the solution to (\ref{revised_OP1}) can be divided in three cases:

\textit{Case 1}: $F_1 \left( \frac{(p_S h_{SB} + \sigma_B^2)(2^{\tilde{R}_B} - 1)}{ h_{CB}} \right) \geq 0$. In this case, the cooperative D2D link is restricted by the first-hop from the DT to the CU. The CU has to limit the transmit power to avoid severe RSI. Therefore $p_C$ is chosen to meet the lower bound as $p_C = \frac{(p_S h_{SB} + \sigma_B^2)(2^{\tilde{R}_B} - 1)}{ h_{CB}}$.

\textit{Case 2}: $F_1 \left( P_C \right) \leq 0$. In this case, the bottleneck link in the cooperative D2D channel is the second-hop from the CU to the DT. Therefore the CU uses the highest transmit power to achieve the Pareto boundary, i.e., $p_C = P_C$.

\textit{Case 3}: $F_1 \left( \frac{(p_S h_{SB} + \sigma_B^2)(2^{\tilde{R}_B} - 1)}{ h_{CB}} \right) < 0$ and $F_1 \left( P_C \right) > 0$. In this case, there must exist a $\hat{p}_C$ such that the achievable rate of the cooperative D2D channel can be maximized as $R_D^{\max} = R_{CD,S}(\hat{p}_C) = R_{SC}(\hat{p}_C)$. The uniqueness of $\hat{p}_C$ is guaranteed by the monotonicity of $R_{SC}(p_C)$ and $R_{CD,S}(p_C)$.

Substituting the $p_C$ in Cases 1-3 into (\ref{revised_alpha}), the proof of Theorem \ref{theorem_1} is completed.

\section{Proof of Theorem \ref{theorem_2}}\label{proof_of_theorem_2}
Conditioning on $\alpha$, the discussion of $\textrm{P}_{out}$ can be divided into the following two cases.

\textit{Case A}: $\alpha \leq \frac{\xi_B}{1+\xi_B}$. In this case, we have $\frac{\alpha}{1-\alpha} \leq \xi_B$. On one hand, we know that $\gamma_{CD,C} < \frac{\alpha}{1-\alpha} \leq \xi_B$, the DR will fail to decode $x_C$, and thus cannot perform SIC to further decode $x_S$. On the other hand, we have $\gamma_{CB} < \frac{\alpha}{1-\alpha} \leq \xi_B$, which suggests that the decoding at the BS also fails. Therefore, the joint outage probability $\textrm{P}_{out} = 1$.

\textit{Case B}: $\alpha > \frac{\xi_B}{1+\xi_B}$. Considering that $\gamma_{CD,C}$ and $\gamma_{CD,S}$ are independent of $\gamma_{CB}$ and $\gamma_{SC}$, $\textrm{P}_{out}$ can be rewritten as
\begin{align}
\textrm{P}_{out}  & = 1 - \mathbb{P} \{ \gamma_{CB} \geq \xi_B, \gamma_{SC} \geq \xi_D \} \nonumber \\
& \quad\quad\quad\quad\times \mathbb{P} \{ \gamma_{CD,C} \geq \xi_B, \gamma_{CD,S} \geq \xi_D \} \nonumber \\
    & =1 - \mathbb{E}_{h_{SB}} \big[ \mathbb{P} \{ \gamma_{CB} \geq \xi_B, \gamma_{SC} \geq \xi_D \lvert h_{SB}\} \big] \nonumber \\
    & \quad\quad\quad\quad \times \mathbb{P} \{ \gamma_{CD,C} \geq \xi_B, \gamma_{CD,S} \geq \xi_D \} \nonumber \\
    & = 1 - \bigg( \underbrace{\mathbb{P} \{ \gamma_{CB} \geq \xi_B, \gamma_{SC} \geq \xi_D, h_{SB} \geq \frac{\theta}{P_S} \}}_{\triangleq \textrm{P}_1} \nonumber \\
    & \quad\quad\quad\quad + \underbrace{\mathbb{P} \{ \gamma_{CB} \geq \xi_B, \gamma_{SC} \geq \xi_D, h_{SB} < \frac{\theta}{P_S}\}}_{\triangleq \textrm{P}_2} \bigg) \nonumber \\
    & \quad\quad\quad\quad \times \underbrace{\mathbb{P} \{ \gamma_{CD,C} \geq \xi_B, \gamma_{CD,S} \geq \xi_D \}}_{\triangleq \textrm{P}_3}.
\end{align}

$\rm{P}_1$ can be further expanded as
\begin{equation}\label{P_1_expanded}
\rm{P}_1 = \underbrace{\mathbb{P}\{\gamma_{CB}\geq\xi_B\}}_{\rm{Q}_1} \underbrace{\mathbb{P}\{\gamma_{SC}\geq\xi_D \big| h_{SB} \geq \frac{\theta}{P_S}\} \mathbb{P}\{h_{SB}\geq\frac{\theta}{P_S}\}}_{\rm{Q}_2}.
\end{equation}
Following the Rayleigh fading assumption, we have the probability density function (pdf) of $h_{ij}$ as $f_{h_{ij}}(x) = \frac{1}{\varphi_{ij}} e^{-\frac{x}{\varphi_{ij}}}$, and thus $\rm{Q}_1$ can be trivially obtained as
\begin{align}\label{Q_1}
\rm{Q}_1 & = \mathbb{P} \bigg\{ h_{CB} \geq \frac{\xi_B (\theta + \sigma_B^2)}{p_C (\alpha-\xi_B+\alpha \xi_B)} \bigg\} \nonumber \\
         & = \int_{\frac{\xi_B (\theta + \sigma_B^2)}{p_C (\alpha-\xi_B+\alpha \xi_B)}}^{+\infty} \frac{1}{\varphi_{CB}} e^{-\frac{h_{CB}}{\varphi_{CB}}} \rm{d} \it{h}_{CB} \\
         & = \exp \bigg[ -\frac{\xi_B (\theta + \sigma_B^2)}{\varphi_{CB} p_C (\alpha-\xi_B+\alpha \xi_B)} \bigg].\nonumber
\end{align}
And $\rm{Q}_2$ can be calculated as
\begin{align}\label{Q_2}
\rm{Q}_2 & = \mathbb{P} \bigg\{ h_{SC} \geq \frac{\xi_D h_{SB} (\beta p_C^\lambda+\sigma_C^2)}{\theta} \big| h_{SB} \geq \frac{\theta}{P_S} \bigg\} \mathbb{P}\{h_{SB}\geq\frac{\theta}{P_S}\} \nonumber \\
         & = \int_{\frac{\theta}{P_S}}^{+\infty} \frac{1}{\varphi_{SB}} \exp \left[ -\frac{\xi_D h_{SB} (\beta p_C^\lambda+\sigma_C^2)}{\varphi_{SC} \theta} \right] \exp \left( -\frac{h_{SB}}{\varphi_{SB}} \right) \rm{d} \it{h}_{SB} \nonumber \\
         & = \frac{\varphi_{SC} \theta \exp\bigg[ -\frac{\theta}{P_S} \left( \frac{\xi_D (\beta p_C^\lambda+\sigma_C^2)}{\varphi_{SC} \theta} + \frac{1}{\varphi_{SB}}\right) \bigg]}{\varphi_{SB} \xi_D (\beta p_C^\lambda + \sigma_C^2) +\varphi_{SC}\theta}. 
\end{align}
Substituting (\ref{Q_1}) and (\ref{Q_2}) back into (\ref{P_1_expanded}), we have $\rm{P}_1$ in (\ref{P1}).

Similarly, $\textrm{P}_2$ can be expanded as
\begin{equation}\label{P_2_expanded}
\rm{P}_2 = \underbrace{\mathbb{P}\{\gamma_{SC}\geq\xi_D\}}_{\rm{Q}_3} \underbrace{\mathbb{P}\{\gamma_{CB}\geq\xi_B \big| h_{SB} < \frac{\theta}{P_S}\} \mathbb{P}\{h_{SB}<\frac{\theta}{P_S}\}}_{\rm{Q}_4}.
\end{equation}

$\rm{Q}_3$ can be computed as
\begin{align}\label{Q_3}
\rm{Q}_3 & = \mathbb{P} \bigg\{ h_{SC} \geq \frac{\xi_D (\beta p_C^\lambda+\sigma_C^2)}{P_S} \bigg\} \nonumber \\
         & = \int_{\frac{\xi_D (\beta p_C^\lambda+\sigma_C^2)}{P_S}}^{+\infty} \frac{1}{\varphi_{SC}} e^{-\frac{h_{SC}}{\varphi_{SC}}} \rm{d} \it{h}_{SC} \\
         & = \exp \bigg[ -\frac{\xi_D (\beta p_C^\lambda+\sigma_C^2)}{\varphi_{SC} P_S} \bigg].\nonumber
\end{align}

$\rm{Q}_4$ can be calculated as
\begin{align}\label{Q_4}
\rm{Q}_4 & = \mathbb{P} \bigg\{ h_{CB} \geq \frac{\xi_B (P_S h_{SB} + \sigma_B^2)}{p_C (\alpha-\xi_B+\alpha \xi_B)} \big| h_{SB} < \frac{\theta}{P_S} \bigg\} \mathbb{P}\{h_{SB} < \frac{\theta}{P_S}\} \nonumber \\
         & = \int_0^{\frac{\theta}{P_S}} \frac{1}{\varphi_{SB}} \exp \left[ -\frac{\xi_B (P_S h_{SB} + \sigma_B^2)}{p_C (\alpha-\xi_B+\alpha \xi_B)} \right] \exp \left( -\frac{h_{SB}}{\varphi_{SB}} \right) \rm{d} \it{h}_{SB} \\
         & = \frac{\varphi_{CB} p_C (\alpha-\xi_B+\alpha \xi_B) }{\varphi_{SB} \xi_B P_S + \varphi_{CB} p_C (\alpha-\xi_B+\alpha \xi_B)} \nonumber \\
& \quad\quad \times \exp \bigg[-\frac{\xi_B \sigma_B^2}{\varphi_{CB} p_C (\alpha - \xi_B +\alpha \xi_B)} \bigg] \nonumber \\
& \quad \quad \times \Bigg( 1 - \exp \bigg[ -\frac{\theta}{P_S} \big( \frac{\xi_B P_S}{\varphi_{CB} p_C (\alpha-\xi_B + \alpha \xi_B)} + \frac{1}{\varphi_{SB}} \big) \bigg]  \Bigg).
\end{align}
Substituting (\ref{Q_3}) and (\ref{Q_4}) back into (\ref{P_2_expanded}), we have $\rm{P}_2$ in (\ref{P2}).

$\rm{P}_3$ can be rewritten as
\begin{equation}
{\rm{P}}_3 = \mathbb{P} \{h_{CD} \geq \frac{\xi_B \sigma_D^2}{p_C (\alpha-\xi_B+\alpha \xi_B)}, h_{CD} \geq \frac{\xi_D \sigma_D^2}{p_C (1 - \alpha)}\}.
\end{equation}
We can see that the expression of $\rm{P}_3$ is segmented by $\alpha$. If $\frac{\xi_B \sigma_D^2}{p_C (\alpha-\xi_B+\alpha \xi_B)} \geq \frac{\xi_D \sigma_D^2}{p_C (1 - \alpha)}$, which is equivalent to $\frac{\xi_B}{1+\xi_B} < \alpha \leq \frac{\xi_B \xi_D + \xi_B}{\xi_B \xi_D + \xi_B + \xi_D}$, ${\rm{P}}_3$ can be computed as
\begin{align}
{\rm{P}}_3 & = \int^{+\infty}_{\frac{\xi_B \sigma_D^2}{p_C (\alpha-\xi_B+\alpha \xi_B)}} \frac{1}{\varphi_{CD}} e^{-\frac{h_{CD}}{\varphi_{CD}}} {\rm{d}} h_{CD} \nonumber \\
& = \exp \left[ -\frac{\xi_B \sigma_D^2}{\varphi_{CD} p_C (\alpha-\xi_B+\alpha \xi_B)} \right].
\end{align}
Otherwise, we have
\begin{align}
{\rm{P}}_3 & = \int^{+\infty}_{\frac{\xi_D \sigma_D^2}{p_C (1 - \alpha)}} \frac{1}{\varphi_{CD}} e^{-\frac{h_{CD}}{\varphi_{CD}}} {\rm{d}} h_{CD} \nonumber \\
& = \exp \left[ -\frac{\xi_D \sigma_D^2}{\varphi_{CD} p_C (1 - \alpha)} \right]
\end{align}
for $\frac{\xi_B \xi_D + \xi_B}{\xi_B \xi_D + \xi_B + \xi_D} < \alpha \leq 1$. The proof of Theorem \ref{theorem_2} ends here.

\section{Proof of Lemma \ref{lemma_2}}\label{appendix_a}
Depending on the relation between $\bar{p}_C$ and $P_C$, Lemma 1 can be proved by separately proving (\ref{constraint_of_case_1}) and (\ref{constraint_of_case_2}).
\subsubsection{Proof of (\ref{constraint_of_case_1})}
We use contradiction to prove (\ref{constraint_of_case_1}). Assuming ${R_{SC}}\left( {{\alpha ^*},p_C^*} \right) > {R_{CD,S}}\left( {{\alpha ^*},p_C^*} \right)$ for $\bar{p}_C < P_C$, and then there must exists a small enough $0 < \Delta {p_C} < P_C - \bar{p}_C$ which satisfies ${R_{SC}}\left( {{\alpha ^*},p_C^* + \Delta {p_C}} \right) > {R_{CD,S}}\left( {{\alpha ^*},p_C^* + \Delta {p_C}} \right)$. Hence, we have $R_D(\alpha^*, p_C^*)=R_{CD,S}(\alpha^*, p_C^*)$. Since ${R_B}\left( {\alpha ,p_C^{}} \right)$ and ${R_{CD,S}}\left( {\alpha ,p_C^{}} \right)$ are increasing functions of ${p_C}$ for a given $\alpha$, we have ${R_B}\left( {{\alpha ^*},p_C^*} \right) < {R_B}\left( {{\alpha ^*},p_C^* + \Delta {p_C}} \right)$ and ${R_{CD,S}}\left( {{\alpha ^*},p_C^*} \right) < {R_{CD,S}}\left( {{\alpha ^*},p_C^* + \Delta {p_C}} \right)$, which suggests that ${R_{\min}}\left( {{\alpha ^*},p_C^* + \Delta {p_C}} \right) > {R_{min}}\left( {{\alpha ^*},p_C^*} \right)$ and contradicts with the original assumption of the optimality of $\left( {{\alpha ^*},p_C^*} \right)$. Therefore, (\ref{constraint_of_case_1}) is proved.
\subsubsection{Proof of (\ref{constraint_of_case_2})}
Similar to the proof of (\ref{constraint_of_case_1}), we first assume ${R_{SC}}\left( {{\alpha ^*},p_C^*} \right) < {R_{CD,S}}\left( {{\alpha ^*},p_C^*} \right)$ for ${\bar p_C} \ge {P_C}$. Then after some algebraic deduction, we know that the optimal transmit power ${p_C}$ must satisfies $p_C^* > {\bar p_C}$, which is in the infeasible field of $\mathcal{OP}2$. Therefore, (\ref{constraint_of_case_2}) is proved and the proof of Lemma \ref{lemma_2} is completed.

\footnotesize{
\bibliographystyle{ieeetran}
\bibliography{bibtex}
}

\end{document}